\newcommand{\parag}[1]{ {\bf \noindent #1}}
\newcommand{\defeq}{\stackrel{\textup{def}}{=}}
\newcommand{\nfrac}{\nicefrac}
\newcommand{\supp}{\mathrm{supp}}
\newcommand{\conv}{\mathrm{conv}}
\newcommand{\cM}{\mathcal{M}}
\newcommand{\cB}{\mathcal{B}}
\newcommand{\capa}{\mathrm{Cap}}
\newcommand{\dcapa}{\underline{\mathrm{Cap}}}
\newcommand{\st}{\mathrm{s.t.}}
\newcommand{\un}{\mathrm{un}}
\newcommand{\Pb}{\mathbb{P}}
\newcommand{\mb}{{M(\cB)}}
\newcommand{\mlb}{{M_{\mathrm{lin}}(\cB)}}
\newcommand{\eps}{\varepsilon}
\newenvironment{proof}{\noindent{\bf Proof:}\hspace*{1em}}{\qed\bigskip}
\newcommand{\qed}{\hfill\ensuremath{\square}}
\def\showauthornotes{0} 
\def\showdraftbox{0}
\newtheorem{theorem}{Theorem}[section]
\newtheorem{definition}{Definition}[section]
\newtheorem{lemma}[theorem]{Lemma}
\newtheorem{remark}[theorem]{Remark}
\newtheorem{corollary}{Corollary}[section]
\newtheorem{fact}[theorem]{Fact}
\def\FullBox{\hbox{\vrule width 6pt height 6pt depth 0pt}}
\def\qed{\ifmmode\qquad\FullBox\else{\unskip\nobreak\hfil
\penalty50\hskip1em\null\nobreak\hfil\FullBox
\parfillskip=0pt\finalhyphendemerits=0\endgraf}\fi}
\def\qedsketch{\ifmmode\Box\else{\unskip\nobreak\hfil
\penalty50\hskip1em\null\nobreak\hfil$\Box$
\parfillskip=0pt\finalhyphendemerits=0\endgraf}\fi}
\newenvironment{proofof}[1]{\begin{trivlist} \item {\bf Proof
#1:~~}}
  {\qed\end{trivlist}}
\newcommand\N{\mathbb N}
\newcommand\R{\mathbb R}
\newcommand\C{\mathbb C}
\newcommand{\marginlabel}[1]%
{\mbox{}\marginpar{\it{\raggedleft\hspace{0pt}#1}}}
\definecolor{Mygray}{gray}{0.8}
\let\csname ifcommentflag\expandafter\endcsname
\newcommand{\Authornote}[2]{{\sf\small\color{red}{[#1: #2]}}}
\newcommand{\Authoredit}[2]{{\sf\small\color{red}{[#1]}\color{blue}{#2}}}
\newcommand{\Authorfnote}[2]{\footnote{\color{red}{#1: #2}}}
\newcommand{\Authorfixme}[1]{\Authornote{#1}{\textbf{??}}}
\newcommand{\Authormarginmark}[1]{\marginpar{\textcolor{red}{\fbox{
#1:!}}}}
\newcommand{\Authornote}[2]{}
\newcommand{\Authoredit}[2]{}
\newcommand{\Authorcomment}[2]{}
\newcommand{\Authorfnote}[2]{}
\newcommand{\Authorfixme}[1]{}
\newcommand{\Authormarginmark}[1]{}
\newcommand{\inparen}[1]{\left(#1\right)}             
\newcommand{\inbraces}[1]{\left\{#1\right\}}           
\newcommand{\inangle}[1]{\left\langle#1\right\rangle} 
\newcommand{\diag}[1]{{\sf Diag}\left({#1}\right)}
\newcounter{lecnum}
\newlength{\tpush}
\title{{\bf Real Stable Polynomials and Matroids:\\  Optimization and Counting } }
\date{}
\author[1]{ Damian Straszak}
\author[2]{Nisheeth K. Vishnoi}
\affil[1,2]{\small \'{E}cole Polytechnique F\'{e}d\'{e}rale de Lausanne (EPFL), Switzerland}
\begin{document}
\maketitle

\thispagestyle{empty}

\begin{abstract}

A great variety of fundamental optimization and counting problems  arising in computer science, mathematics and physics  can be reduced to  one of the following computational tasks involving polynomials and set systems: given an oracle access to an $m$-variate real polynomial $g$ and to a family of (multi-)subsets $\cB$ of $[m]$, (1) find  $S \in \cB$ such that the monomial in $g$ corresponding to $S$ has the largest coefficient in $g$,  or (2) compute the sum of coefficients of monomials in $g$ corresponding to all the sets that appear in $\cB$.
Special cases of these problems, such as  computing permanents and mixed discriminants, sampling from determinantal point processes, and maximizing subdeterminants with combinatorial constraints have been topics of much recent interest in theoretical computer science.

In this paper we present a very general convex programming framework geared to solve both of these problems.
Subsequently, we show that roughly,  when $g$ is a {\em real stable} polynomial with non-negative coefficients and $\cB$ is a matroid, the integrality gap of our convex relaxation is finite and depends only on $m$ (and not on the coefficients of $g$) -- in fact, in most interesting cases it is never worse than $e^m$.

Prior to our work, such results were known only in  important but sporadic cases that relied heavily on the structure of either $g$ or $\cB$; it was not even a priori clear if one could formulate a convex relaxation that has a finite integrality gap beyond these special cases.
Two notable examples are a result by Gurvits \cite{Gurvits06} on the van der Waerden conjecture 
for all real stable $g$ when $\cB$ contains {\em one} element, and a result by Nikolov and Singh \cite{NS16}
for a family of \emph{multilinear} real stable polynomials when $\cB$ is the partition matroid.
Our work, which encapsulates almost all interesting cases of $g$ and $\cB$,  benefits from both -- we were inspired by the latter in coming up with the right convex programming relaxation and the former in deriving the integrality gap.  
However,  proving our results requires significant extensions of both; in that process we come up with new notions and connections between real stable polynomials and matroids which should be of independent and wide interest.

\end{abstract}

\newpage

\thispagestyle{empty}

\tableofcontents
\newpage

\setcounter{page}{1}

\section{Introduction}

Several fundamental problems that arise in computer science, mathematics and physics can be formulated as  the following computational tasks regarding multivariate polynomials:  
given oracle access to an $m$-variate real polynomial  $g(x_1, x_2, \ldots, x_m)=\sum_{\alpha \in \N^m} g_\alpha x^\alpha$, where $x^\alpha$ denotes $\prod_{i=1}^m x_i^{\alpha_i}$, and to  a family $\cB \subseteq \N^m$ of multi-indices:
\vspace{-2mm}
\begin{enumerate}
\item {\bf Optimization.} Can we find $\max_{\alpha \in \cB} g_\alpha$ efficiently?
\vspace{-2mm}
\item {\bf Counting.} Can we compute the sum $g_\cB := \sum_{\alpha \in \cB} g_\alpha$ efficiently?
\end{enumerate}
\vspace{-2mm}
We do not restrict the number of monomials in $g$ or the size of $\cB$ -- they can be exponential; indeed if both are polynomially many then the problem is easy.
Instead, we assume access to an evaluation oracle; given any input $x\in \R^m$ the oracle returns $g(x)$. 
Similarly for $\cB$, we assume that an appropriate {\em separation} oracle is provided. 

Such a setting is very general -- on the one hand it captures most counting and discrete optimization problems,  on the other hand, it is easily to seen to be intractable, even if $\cB$ contains a single element. 
Indeed, if only a small (polynomial) number of input-output pairs for $g$ are known, an adversary has a large flexibility in choosing the coefficients of $g$. 
To escape this problem (at least partly) we assume $g$ has nonnegative coefficients. 
Nonnegative polynomials are already quite general and appear frequently in statistical physics, probability (as generating polynomials of  distributions),  machine learning,  as well as in combinatorics~\cite{PemantleHyperbolic}.
Indeed,  all the polynomials underlying the special cases mentioned in the abstract are nonnegative. %

\noindent
{\bf Permanents and Mixed Discriminants.} One of the most studied combinatorial counting problems is the permanent of a nonnegative square matrix $A\in \R_{\geq 0}^{m\times m}$. 
It is defined as $\mathrm{per}(A)=\sum_{\sigma \in S_m} \prod_{i=1}^m A_{i,\sigma(i)}$. 
This problem is known to be $\#P$-complete, hence no exact algorithm for computing permanents is expected to exist. 
However, interestingly, the counting problem (2) above can be used to express the permanent of $A$ by letting $g(x)=\prod_{i=1}^m(\sum_{j=1}^m x_j A_{i,j})$ and $\cB=\{(1,1,\ldots,1)\}$.
Clearly $g(x)$ has nonnegative coefficients and is easy to evaluate on any input;  still, computing its  multilinear coefficient is a hard problem. 
Mixed discriminants provide a powerful generalization of permanents that appears naturally in a variety of settings, e.g., as coefficients of mixed characteristic polynomials which played an important role in the recent resolution of the Kadison-Singer problem \cite{MSS1, MSS2}. 
They can be captured similarly by problem (2) as they arise as coefficients of determinantal polynomials of the form $\det (\sum_{i=1}^m x_i A_i)$ where $A_1, \ldots, A_m \in \R^{m\times m}$ are positive semidefinite matrices.  
\noindent
{\bf Determinantal Point Processes (DPP).}
A DPP is a probability distribution over subsets of $[m]=\{1,2,\ldots, m\}$ defined with respect to a positive semidefinite matrix $L\in \R^{m\times m}$ such that for all $S\subseteq [m]$ we have $\Pb(S)\propto \det(L_{S,S})$, where $L_{S,S}$ is the principal minor of $L$ corresponding to columns and rows in $S$. 
In fact, if the vectors $v_1, \ldots, v_m \in \R^n$ come from the Cholesky decomposition of $L$, then the determinantal polynomial $q(x)=\det\inparen{ \sum_{i=1}^m x_i v_i v_i^\top}$ is a generating polynomial for such a distribution. 
DPPs are important objects of study in combinatorics, probability, physics and more recently in computer science, as they provide excellent models for diversity-based sampling in machine learning (\cite{KuleszaTaskar12}). 
The applicability of DPPs to real life problems crucially relies on efficient algorithms for solving computational problems related to them. These include the problem of sampling from a DPP, computing its partition function and the MAP-inference problem which asks to find the set of highest probability (or equivalently to find the largest coefficient of $q(x)$). 
For the case of unconstrained DPPs problems (1) and (2) are quite well understood, and various solutions have been proposed \cite{Khachiyan95, DR10, AGR16, Nikolov15, Summa15}. Recently, the case of constrained DPPs -- when the support is restricted to a combinatorial family $\cB \subseteq 2^{[m]}$ -- has been studied \cite{NS16,SV16,CDKV16} with machine learning applications in mind, however, very little is known computationally.

\noindent
{\bf Related prior work.} All the examples discussed so far are special cases of the counting and optimization problems mentioned in the beginning. 
Prior results for such problems are far and few -- they either work with a specific class of polynomials or with a simple family $\cB$. 
An important work by Gurvits \cite{Gurvits06} focused on recovering the coefficient of the multilinear term $x_1x_2 \cdots x_m$ from an $m-$variate polynomial $g$. 
In our language, this corresponds to setting $\cB=\{(1,1,\ldots, 1)\}$. 
Towards this, he introduced the notion of \emph{capacity} of a polynomial $g$.
The capacity of a polynomial can be shown to be efficiently computable (given access to an evaluation oracle of the polynomial).
Gurvits proved that, when the polynomial $g$ is {\em real stable}, its capacity is a good (multiplicative) approximation of the coefficient of interest -- in particular the approximation factor does not depend on the coefficients of $g$. 
Real stability is a geometric condition on the set of zeros of a polynomial
and holds for many important classes of polynomials, in particular, for the polynomial $p_A$ and determinantal polynomials introduced earlier.
In fact, Gurvits derived an approximation bound of $e^m$ when $g$ is an $m-$homogeneous real stable polynomial with nonnegative coefficients.\footnote{A polynomial $g(x) = \sum_{\alpha} g_\alpha x^\alpha$ is $n-$homogeneous if for every $\alpha$ with $g_\alpha \neq 0$ we have $\sum_{i=1}^m \alpha_i=n$.} 

The case of a determinantal polynomial $g(x) = \det\inparen{\sum_{i=1}^m x_i v_i v_i ^\top }$ and $\cB$ a set of bases of a partition matroid 
was recently studied by Nikolov and Singh \cite{NS16} where
they presented a polynomial time $e^n$-approximate algorithm to estimate the value of the optimization problem $\max_{S\in \cB} g_S$ (where $n$ is the rank of the matroid). 
For general homogeneous real stable polynomials $g$ and $\cB$ of the form ${[m] \choose n}$,\footnote{We identify sets $S\subseteq [m]$ with their characteristic vectors $1_S \in \N^m$.} an $e^n$-approximation algorithm for the counting problem $\sum_{S\in \cB} g_S$ was obtained in~\cite{AOSS16}. 
Exact counting algorithms for determinantal polynomials were obtained in~\cite{SV16} under the condition that $\cB$ has a description of {\em constant dimension}.

Except the latter (which addresses the case of exact computation), the results so far relied on coming up with capacity-like quantities that can be computed efficiently using \emph{convex programming}. 
Then, using the properties of real stable polynomials, they were shown to approximate the quantities of interest. 
The question of whether such an approach would work more generally for matroids was left open. 
In fact, one of the key difficulties to extend this approach beyond partition matroids was to come up with a notion of capacity that can be captured by convex programming and is zero if the quantity it is trying to approximate is zero.

\noindent
{\bf Our contribution.} In this paper we introduce a new  notion, the $\cB-$capacity of a polynomial $g(x_1, \ldots, x_m)$ with respect to a family $\cB \subseteq 2^{[m]}$, denoted by $\capa_\cB(g)$. 
It enjoys good computational properties, as  it can be evaluated given oracle access to $g$ and to  $\cB$. 
The main question then is: can $\capa_\cB(g)$ serve as a good approximation to $g_\cB$?

We prove that, assuming $g$ is nonnegative, homogeneous and real stable, the $\cB-$capacity of $g$ approximates $g_\cB$ within a factor that depends only on $m$ whenever $\cB$ arises as a set of bases of a matroid $\cM$ (under mild conditions on $\cM$).
 Surprisingly, we also observe that when either of these conditions fails, one cannot hope for this result to hold. 
 %
%

As a consequence we obtain polynomial time algorithms that estimate the sum $\sum_{S\in \cB} g_S$ for any nonnegative real stable polynomial $g$ and a large class of matroid families $\cB$ up to an approximation factor no worse than $e^m$;  this factor can be improved if we know more about the structure of $g$ or $\cB$.
Further, using our notion of $\cB-$capacity, we are able to design a convex program for approximating $\max_{S\in \cB} g_S$. 
 We show that the approximation ratio is essentially of the same order as the guarantee achieved when estimating $g_\cB$ by $\capa_\cB(g)$. 
 This gives one  common framework under which all previous results can be understood and treated as special cases and, as an aside, provides another interesting connection between real stable polynomials and matroids~\cite{COSW04, Branden07}.
Moreover, it provides non-trivial approximation algorithms for various important open problems, such as the DPP MAP-inference problem under matroid constraints. 

\subsection{Statement and Overview of Our Results}

We consider real multivariate polynomials $g$ with nonnegative coefficients, thus $g\in \R^+[z_1, \ldots, z_m]$. 
We are also given a family of sets $\cB \subseteq {2^{[m]}}$. 
We assume that $g$ is given as an evaluation oracle and $\cB$ is given as a separation oracle for the convex hull of $\cB$, i.e., for $P(\cB) = \conv \{ 1_S: S\in \cB\} \subseteq \R^m$. 
Every set $S\subseteq [m]$ gives rise to a  monomial $x^S = \prod_{i\in S} x_i$, by $g_S$ we denote the coefficient of $x^S$ in $g$. 
We consider two computational problems: 
\begin{enumerate} 
\item  finding $\max_{S\in \cB} g_S$ (the optimization problem),
\item computing $\sum_{S\in \cB} g_S$ (the counting problem).
\end{enumerate}

\noindent
We present a general framework for solving such counting and optimization problems. 
We start with the counting problem and later extend our solution to the optimization problem. 
The key idea in our approach is to approximate $g_\cB$ by the optimal value of a convex program defined with respect to $g$ and $\cB$. 
In its rudimentary form, this approach was explicitly pioneered by \cite{Gurvits06}, where the following notion of capacity of an $m-$variate polynomial $g\in \R^+[x_1, \ldots, x_m]$ was introduced:
$$\capa(g)=\inf \left\{g(z): z>0, \prod_{i=1}^m z_i=1 \right\}.$$
It is not hard to see that after introducing new variables $y_i = \log z_i$ and replacing the objective by $\log g(z)$, one obtains a convex program, which can be solved efficiently. 

The crucial fact proved by~\cite{Gurvits06} is that whenever $g$ is a real stable \emph{and} homogeneous polynomial, then $\capa(g)$ approximates $g_{[m]}$ (the coefficient of $\prod_{i=1}^m x_i$ in $g(x)$) up to a multiplicative factor of $e^m$, i.e., the approximation guarantee does not depend on the coefficients $p$. 
It is important to note that no such result holds  if we put no restrictions on $g$.\footnote{Indeed for $g(x)=\sum_{i=1}^m x_i^m + \eps \prod_{i=1}^m x_i$ one can easily see that $\capa(g)\geq 1$, while $g_{[m]}=\eps$ can be arbitrarily small. }

A polynomial $g\in \R[x_1, \ldots, x_m]$ is said to be real stable if none of its roots $x\in \R^m$ satisfy $\Im(x_i)>0$, for every $i=1,2,\ldots, m$. For univariate polynomials, real stability is equivalent to real-rootedness, which generally speaking implies good analytic properties of such polynomials. 
Real stable polynomials have been extensively studied in mathematics, in particular a complete characterization of their {\em closure} properties under linear maps is known~\cite{BB09, BB09b,Wagner11}. 
Many natural polynomials (such as the determinantal polynomial $\det(\sum_{i} x_i A_i)$) are known to be real stable, and many other can be derived by applying these closure properties to them. 
Real stable polynomials have recently found many applications in combinatorics, probability and theoretical computer science (\cite{BBL09,PemantleHyperbolic,MSS1,MSS2,Vishnoi-Survey,ShayanFOCS2015,NS16, AOSS16,AGR16}).

In an attempt to provide an efficiently computable estimate for $\sum_{S\in \cB} g_S$ we propose the following new notion of $\cB-$capacity of a polynomial. 
\begin{definition}[$\cB-$capacity]
For a polynomial $g \in \R^+[x_1, \ldots, x_m]$ and any family of sets $\cB \subseteq 2^{[m]}$ we define the $\cB-$capacity of $g$ to be
$$\capa_{\cB}(g) = \sup_{\theta \in P(\cB)} \inf_{z>0} \frac{g(z)}{\prod_{i=1}^m z_i^{\theta_i}}.$$
\end{definition}

\noindent
When $n=m$ and $\cB=\{\{1,2,\ldots,m\}\}$ one recovers Gurvits' notion of capacity from ours.
  In~ Section \ref{ssec:computability} we prove that $\capa_\cB(g)$ can be stated as a convex program and thus computed efficiently.
Additionally, when the coefficients of $g$ correspond to a probability distribution, our $\capa_\cB(g)$ has a natural interpretation in terms of entropy~\cite{SinghV14}, see Section~\ref{sec:entropy}.
Importantly, an (equivalent) dual characterization of $\cB-$capacity allows us to prove that $\capa_\cB(g)$ is an upper bound on  $g_\cB$, see Section~\ref{ssec:dual}. 
This duality for the special case  $\cB={[m] \choose n}$ was also observed in~\cite{AOSS16}.
Our next definition captures how good an approximation $\capa_\cB(g)$ is to $g_\cB$.
We define the following two approximation ratios: the second one being for the case when $g$ is a multilinear polynomial.
We provide both as, in applications, the polynomial $g$ is often multilinear and stronger bounds can be obtained for this setting.
\begin{definition}[Approximation Ratios]
For a family $\cB \subseteq {[m] \choose n}$ we define:
\begin{align*}
\mb& \defeq \sup \inbraces{ \frac{\capa_\cB(p)}{p_\cB}: p\in \R^+ [x_1, \ldots, x_m] \mbox{ is real stable, $n$-homogeneous and $p_\cB>0$}},\\
\mlb& \defeq \sup \inbraces{ \frac{\capa_\cB(p)}{p_\cB}: p\in \R_1^{+} [x_1, \ldots, x_m] \mbox{ is real stable, $n$-homogeneous and $p_\cB>0$}}.\\
\end{align*}
\end{definition}
Here, $\R^+_{1}[x_1, \ldots, x_m]$ denotes the set of all multiaffine polynomials with nonnegative coefficients. 
It follows from these definition that $\mlb \leq \mb$.
The first question one can ask is: are these quantities finite?
Our first result gives a sufficient condition for $\mb$ being bounded and makes a connection between real stable polynomials and matroids.
In particular, it relies on the interplay between matroids and supports of {\em Strongly Rayleigh} distributions.\footnote{A distribution $\mu$ over subsets of $[m]$ is called Strongly Rayleigh if its generating polynomial $g(z)= \sum_{S\subseteq [m]} \mu(S) z^S$ is real stable.} 

\begin{theorem}[Finiteness of $\mb$]\label{thm:finite}
Let $\cB \subseteq {[m] \choose n}$ be a family of bases of a matroid and let $\cB^\star$ be the family of bases of the dual matroid. If there exists a strongly Rayleigh distribution supported on $\cB^\star$, then $\mb < \infty$.
\end{theorem}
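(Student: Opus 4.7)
The plan is to reduce finiteness of $\mb$ to Gurvits' classical capacity bound for $m$-homogeneous real stable polynomials by ``lifting'' $g$ via the SR generating polynomial and transferring the resulting estimate back through a minimax reformulation of $\capa_\cB$. First, let $\mu$ be a strongly Rayleigh distribution on $\cB^\star$ and $h(x) = \sum_{T \in \cB^\star} \mu(T)\, x^T$ its generating polynomial; by hypothesis $h$ is real stable, multi-affine, $(m-n)$-homogeneous, with nonnegative coefficients summing to $1$. For any candidate $g \in \R^+[x_1,\ldots,x_m]$ that is real stable, $n$-homogeneous with $g_\cB > 0$, I would consider the product $p(x) := g(x) h(x)$; since real stability is preserved under products, $p$ is real stable, $m$-homogeneous with nonnegative coefficients, and Gurvits' theorem yields $\capa(p) \leq e^m\, p_{1_{[m]}}$. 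Extracting the $x^{[m]}$-coefficient (only $x^\alpha \cdot x^T$ with $\alpha = 1_{[m]\setminus T}$, $T \in \cB^\star$ contributes) gives $p_{1_{[m]}} = \sum_{S \in \cB} g_{1_S}\, \mu([m]\setminus S) \leq g_\cB$, so $\capa(p) \leq e^m g_\cB$.

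Next I would rewrite $\capa_\cB(g)$ in a tractable dual form. Writing $z = e^y$, the inner objective $\log g(e^y) - \langle \theta, y\rangle$ is convex in $y$ (since $g(e^y)$ is a log-convex sum of exponentials) and linear in $\theta$; Sion's minimax theorem on the compact polytope $P(\cB)$ therefore gives
\[
\capa_\cB(g) \;=\; \inf_{z>0}\; \frac{g(z)}{\min_{S \in \cB}\, z^S} \;=\; \inf_{z>0}\; \max_{S \in \cB}\; \frac{g(z)}{z^S}.
\]
By $n$-homogeneity I may restrict the infimum to $\{\prod_i z_i = 1\}$, on which $z^T = 1/z^{[m]\setminus T}$ and hence $h(z) = \sum_{S \in \cB} \nu(S)/z^S$ with $\nu(S) := \mu([m]\setminus S)$. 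This produces the pointwise identity $g(z) h(z) = \sum_{S \in \cB} \nu(S)\, g(z)/z^S$, and choosing $S^\ast(z)$ to attain $\max_S g(z)/z^S$ yields $g(z) h(z) \geq \nu(S^\ast(z))\, \max_S g(z)/z^S$. When $\mu$ has full support on $\cB^\star$ (so $\nu_{\min} := \min_{S \in \cB} \nu(S) > 0$), taking $\inf_z$ of both sides yields $\capa(p) \geq \nu_{\min}\, \capa_\cB(g)$; combining with the Gurvits-based estimate gives $\capa_\cB(g) \leq e^m g_\cB / \nu_{\min}$, whence $\mb \leq e^m/\nu_{\min} < \infty$.

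The main obstacle I expect is removing this full-support hypothesis, since the theorem only posits existence of \emph{some} SR distribution on $\cB^\star$. By the Borcea--Br{\"a}nd{\'e}n--Liggett structure theorem, $\supp(\mu)$ is itself the base family of a matroid $\cM' \subseteq \cM^\star$; if this inclusion is strict then $\nu_{\min} = 0$ and the argument above only bounds $\capa_{\cB'}(g)$ for the subfamily $\cB' \subseteq \cB$ dual to $\supp(\mu)$, rather than $\capa_\cB(g)$ itself. To close the gap I anticipate needing either (i) an upgrade lemma showing that the SR hypothesis on $\cB^\star$ can be promoted to a full-support SR distribution (perhaps by convex-averaging a family of SR distributions whose supports collectively cover $\cB^\star$), or (ii) an induction on matroid deletion and contraction under which both the SR hypothesis on the dual and the relevant capacity structure propagate; navigating this combinatorial extension while preserving the $e^m$-type dependence is the principal technical hurdle.
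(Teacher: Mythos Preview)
Your argument is correct and essentially the same as the paper's: multiply by the SR generating polynomial $h$, apply Gurvits' inequality to the $m$-homogeneous product $gh$, and pass back to $\capa_\cB(g)$; the paper packages this last step as the inequality $\capa(gh)\ge \capa_\cB(g)\cdot\dcapa_{\cB^\star}(h)$ together with $\dcapa_{\cB^\star}(h)\ge 1$ for the rescaled $h$, which is exactly your pointwise $\nu_{\min}$ estimate in different clothing (and the paper's dual characterization of $\capa_\cB$ via Lagrangian duality is equivalent to your Sion step). Your final paragraph is a misreading of the hypothesis: in the paper, ``a strongly Rayleigh distribution supported on $\cB^\star$'' means $\supp(\mu)=\cB^\star$, i.e.\ full support, as is confirmed by the paper's own proof, which simply invokes the lemma requiring $\supp(\mu)=\cB^\star$; no upgrade lemma or matroid induction is needed.
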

Strongly Rayleigh distributions have been extensively studied~\cite{COSW04,Branden07,BBL09}, and it is known that most\footnote{A non-example has been discovered by \cite{BBL09}: the 7-element Fano matroid.} matroids satisfy the condition stated in Theorem~\ref{thm:finite}. 
Interestingly, when one gives up either real stability or the assumption that $\cB$ comes from a matroid, then $\mb$ can be infinite, 
we provide such examples in Section~\ref{sec:examples}.

The proof of Theorem \ref{thm:finite} appears in Section~\ref{ssec:proofs}, below we describe the key steps.
We first connect our notion of capacity to that of Gurvits via an inequality of the form (see Lemma~\ref{lemma:capa})\footnote{A similar inequality, for a specific choice of $h$ and $\cB$ was considered in~\cite{AOSS16}.}
\begin{equation}\label{eq:capa_ineq}
\capa(g\cdot h) \geq \capa_{\cB}(g) \cdot \dcapa_{\cB^\star} (h),
\end{equation}
where $\dcapa_{\cB^\star}$ denotes what we call the {\em lower capacity} of  $\cB^\star$ -- it is defined as
$$\dcapa_{\cB^\star}(g) = \inf_{\theta \in P(\cB^\star )} \inf_{z>0} \frac{g(z)}{\prod_{i=1}^m z_i^{\theta_i}}.$$
Note that in the definition of lower capacity the supremum is replaced by infimum.
Since inequality~\eqref{eq:capa_ineq} holds for every polynomial $h$, it allows us to upper-bound the $\cB-$capacity by providing an appropriate $h$. 
Our choice of $h$ should ideally allow us to relate $\capa(g\cdot h)$ to $g_\cB$, as our primary goal is to upper bound the ratio $\frac{\capa_{\cB}(g)}{g_\cB}$. 
To this end we introduce a notion of $\cB^\star-$selection, which essentially describes sufficient conditions on $h$ for this to succeed. 
We require that the coefficient of the monomial $\prod_{i=1}^m x_i$ in $(g\cdot h)$ is (exactly or approximately) equal to $p_\cB$. 
Using Gurvits' inequality we prove that for any real stable $\cB^\star-$selection $h$ it holds that 
$$ g_\cB \geq \frac{ m!}{ m^m} \dcapa_{\cB^\star}(h)  \cdot \capa_{\cB}(g). $$
Our task is then reduced to coming up with a good $\cB^\star-$selection $h$, whose lower capacity $\dcapa_{\cB^\star}$ is as large as possible. 
Subsections~\ref{sec:selection1}, \ref{sec:selection2}, \ref{ssec:partition} deal with this problem and propose several choices, depending on $\cB$. 
The most canonical one is 
$$h(z) = \sum_{S\in \cB^\star} z^S,$$
in Lemma~\ref{lemma:dual_capa} we prove that $\dcapa_{\cB^\star}(h)\geq 1$, however one of the conditions we impose on a $\cB^\star-$selection is real stability, hence this particular $h$ gives us the answer only in the case when $\cB^\star$ is a strongly Rayleigh matroid~\cite{Branden07}. 
If this particular ``uniform'' generating polynomial is not real stable, one might consider other polynomials supported on $\cB^\star$. 
For instance, if $\cB$ is a linear matroid then $\cB^\star$ is also linear and there exists a determinantal polynomial $h(z)=\det(\sum_{i=1}^m z_i v_i v_i^\top)$ whose support is exactly $\cB^\star$. 
This leads to the notion of approximate $\cB^\star-$selections, which in the end allows us to prove finiteness of $\mb$ whenever there exists a nonnegative real stable polynomial supported on $\cB^\star$. 
The price we pay to allow this ``non-uniformity'' is an additional term in the approximation guarantee which is equal to the ratio between the largest and the smallest coefficient of $h$ in $\cB^\star$.

Now we present  quantitative bounds on $\mlb$ and $\mb$ for a large class of matroids.
In a nutshell, we can prove that for most interesting matroids, whenever these quantities are finite, they are never worse than $e^m$.
Prior to our work, two bounds of this type, for special cases of matroids, were known. 
For the case of uniform matroids, i.e., $\cB = {[m] \choose n}$, the inequality of~\cite{Gurvits06} yields   $\mb \leq e^n$ for $n=m$, which was recently extended to any $m\geq n$ by~\cite{AOSS16}.\footnote{The paper of Barvinok  \cite{Barvinok12} also uses a capacity like quantity to approximately count $0$-$1$ matrices with prescribed row and column sums.}
For partition matroids it was implicitly shown in~\cite{NS16} that $\mlb \leq e^n$, where $n$ is the rank of the matroid; we state the precise bound in the theorem below and derive it in Section~\ref{app:partition}.
We can recover all these bounds  by using the structure of $g$ or $\cB$. 
For precise definitions of partition matroids, linear matroids and the {\em unbalance} $\un(\cM)$ of a linear matroid we refer to Section~\ref{sec:preliminaries}.

\begin{theorem}[Quantitative Bounds]\label{thm:bounds}
Let  $\cB \subseteq {[m] \choose n}$ be a family of bases of a matroid $\cM$.
\begin{enumerate}
\item {\bf Strongly Rayleigh.} If there is a full support Strongly Rayleigh distribution $\{p_S\}_{S\in \cB^\star}$ on the set $\cB^\star$ of bases of the dual matroid and $P=\max_{S,T\in \cB^\star} \frac{p_S}{p_T}$ then ${\mb} \leq  P \cdot \frac{m^m}{m!}$ and $\mlb \leq P \cdot 2^m$.
\item {\bf Linear Matroids.} If $\cM$ is $\R-$linear,  then ${\mb} \leq \un(\cM^\star) \cdot \frac{m^m}{m!}$ and $\mlb \leq \un(\cM^\star) \cdot 2^m$.
\item {\bf Partition Matroids.} If $\cM$ is a partition matroid $\{U(P_j, b_j)\}_{j\in [p]}$ then ${\mb}\leq \frac{m^m}{m!} \cdot \prod_{j=1}^p \frac{(|P_j|-b_j)!}{(|P_j|-b_j)^{|P_j|-b_j}}\leq e^{n+p} \inparen{\frac{m}{p}}^{p/2}$ and $\mlb \leq \frac{n^n}{n!} \prod_{j=1}^p \frac{b_j!}{b_j^{b_j}}$.
\end{enumerate}
\end{theorem}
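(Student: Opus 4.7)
The plan is to instantiate the selection framework outlined in the overview: for each of the three matroid classes we exhibit a real stable ``selection polynomial'' $h$ whose support lies in $\cB^\star$, and then chain three inequalities. First, the product inequality (inequality~\eqref{eq:capa_ineq}) gives $\capa(gh) \geq \capa_\cB(g)\cdot \dcapa_{\cB^\star}(h)$. Second, Gurvits' van~der~Waerden-type inequality yields $(gh)_{[m]} \geq C_m \cdot \capa(gh)$, with $C_m = m!/m^m$ in the $n$-homogeneous case and the sharper $C_m = 2^{-m}$ in the multiaffine setting. Third, the selection property of $h$, together with matroid duality (which identifies $[m]\setminus S$ with a basis of $\cM^\star$ for each $S\in \cB$), forces $(gh)_{[m]} = \sum_{S\in\cB} g_S\, h_{[m]\setminus S} \leq \bigl(\max_{T\in\cB^\star} h_T\bigr) \cdot g_\cB$. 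Concatenating these ingredients gives $\capa_\cB(g)/g_\cB \leq (1/C_m)\cdot\bigl(\max_T h_T\bigr)/\dcapa_{\cB^\star}(h)$, reducing each part of the theorem to exhibiting a selection polynomial with a favorable value of this ratio.

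For part~1 (Strongly Rayleigh), I would take $h(z) = \sum_{T\in\cB^\star} p_T\, z^T$: this is real stable precisely because $\{p_T\}$ is Strongly Rayleigh. Since $h(z) \geq p_{\min} \cdot \sum_{T\in\cB^\star} z^T$ on the positive orthant and the ``uniform'' dual capacity is at least $1$ (the lemma cited in the overview), we get $\dcapa_{\cB^\star}(h) \geq p_{\min}$, while $\max_T h_T = p_{\max}$, which yields the stated $P\cdot m^m/m!$ and $P\cdot 2^m$ bounds. For part~2 (Linear Matroids), represent $\cM^\star$ by vectors $v_1,\ldots,v_m$ and take the determinantal polynomial $h(z) = \det\bigl(\sum_i z_i v_i v_i^\top\bigr)$. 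By Cauchy--Binet, $h(z) = \sum_{T\in\cB^\star} \det(V_T)^2\, z^T$, so $h$ is supported exactly on $\cB^\star$ and is classically real stable; the ratio of its largest to smallest nonzero coefficient is, by definition, $\un(\cM^\star)$, reducing this case to part~1 with $P = \un(\cM^\star)$.

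Part~3 (Partition Matroids) exploits the product structure of the matroid. I would take $h(z) = \prod_{j=1}^p e_{|P_j|-b_j}(z_{P_j})$, a product of elementary symmetric polynomials in disjoint variable blocks, which is multiaffine and real stable. Because $h$ factors across blocks and the base polytope $P(\cB^\star)$ decomposes as a product over blocks as well, both $\dcapa_{\cB^\star}(h)$ and the relevant coefficient of $gh$ split block-wise. A block-wise application of Gurvits' inequality (rather than a single global one) then contributes a per-block factor of $(|P_j|-b_j)^{|P_j|-b_j}/(|P_j|-b_j)!$ to the extraction, producing the product formula in the theorem; Stirling's approximation turns the product into $e^{n+p}(m/p)^{p/2}$. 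The multiaffine refinement $\prod_j b_j!/b_j^{b_j}$ follows by symmetrically running the same argument with the roles of $\cM$ and $\cM^\star$ swapped and invoking the sharper multiaffine Gurvits constant.

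The hardest step is the multiaffine refinement that replaces $m!/m^m$ with $2^{-m}$. The subtlety is that even when $g$ is multiaffine, the product $g\cdot h$ generally is not, so the multiaffine Gurvits inequality cannot be applied naively to $gh$. Making the $2^m$ bound go through will require either polarizing $g\cdot h$ to a multiaffine polynomial in auxiliary variables while preserving capacity and real stability, or bypassing the product $gh$ altogether via a direct entropy/Legendre-transform argument on the multiaffine Strongly Rayleigh polynomial $g$ (in the spirit of the entropy interpretation discussed in the overview). A secondary but still delicate issue is matching the exact combinatorial constants in part~3: the per-block capacities of elementary symmetric polynomials must be computed carefully, and the Stirling-based simplification to $e^{n+p}(m/p)^{p/2}$ requires tracking the $\sqrt{2\pi k}$ factors across all blocks.
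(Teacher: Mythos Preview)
Your approach to parts~1 and~2 is essentially identical to the paper's: construct a multiaffine real stable $\cB^\star$-selection (the rescaled Strongly Rayleigh generating polynomial, resp.\ the determinantal polynomial), observe that its lower capacity is at least $1$, and plug into the capacity inequality combined with Gurvits.

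However, you mis-identify the hard step. The $2^{-m}$ refinement is \emph{not} subtle: when both $g$ and $h$ are multiaffine, each variable in $g\cdot h$ has degree at most $2$, so Gurvits' degree-dependent inequality $\capa(p)\leq \prod_{i=1}^m \bigl(\tfrac{d_i}{d_i-1}\bigr)^{d_i-1} p_{[m]}$ immediately gives the factor $2^m$. No polarization or entropy argument is needed; you only have to notice that the selections in parts~1 and~2 are themselves multiaffine.

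Your plan for part~3 has a genuine gap. Taking $h=\prod_j e_{|P_j|-b_j}(z_{P_j})$ is a valid multiaffine $\cB^\star$-selection, but its lower capacity is only $\geq 1$, so the framework gives $\mb\leq m^m/m!$ without the additional product factor $\prod_j (|P_j|-b_j)!/(|P_j|-b_j)^{|P_j|-b_j}$. Your proposed ``block-wise Gurvits'' cannot recover this saving: $g$ does not factor across blocks, so the product $gh$ is a single $m$-homogeneous polynomial in all $m$ variables and Gurvits is applied once globally. The paper instead uses the \emph{non-multiaffine} selection $h(z)=\prod_j (\sum_{i\in P_j} z_i)^{|P_j|-b_j}/(|P_j|-b_j)!$, for which one can compute directly (via Jensen on each block) that $\dcapa_{\cB^\star}(h)\geq \prod_j (|P_j|-b_j)^{|P_j|-b_j}/(|P_j|-b_j)!$; this is where the product factor comes from.

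For the $\mlb$ bound in part~3, the paper abandons the selection framework entirely and uses a symmetrization trick: replace each $z_i$ with $i\in P_j$ by a sum $\sum_{k=1}^{b_j} u_{j,k}$ of fresh variables, obtaining an $n$-variate $n$-homogeneous polynomial to which Gurvits is applied directly; the $\prod_j b_j!$ arises from counting multilinear expansions. Your ``swap $\cM$ and $\cM^\star$'' sketch does not describe this mechanism.
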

Theorem~\ref{thm:bounds} implies in particular that for a large class of matroids (including graphic matroids and all regular matroids which fall under the case (1) and (2)), 
$\capa_\cB(g)$ gives an $e^m-$approximation to $g_\cB$. 
Since (as demonstrated in Section~\ref{ssec:computability}) computing $\capa_\cB(g)$ is possible in polynomial time, this implies an efficient method to estimate $g_\cB$ -- the counting problem.
We do not attempt to optimize the bounds in Theorem~\ref{thm:bounds}; most likely they can be improved, however an exponential dependence on $n$ is inevitable. 
In Section~\ref{app:lower_bound} we provide an example where $\mb$ can be as large as $e^{\sqrt{m}}$ for partition matroids.

We now turn to the optimization problem. 
The challenge in solving the problem $\max_{S\in \cB} g_S$ lies in restricting the optimization process to sets in $\cB$ only, while not making the algorithm too ``discrete''. 
To explain this intuition, assume that $g$ is a multilinear polynomial. 
In such a case we can compute $g_S$ for any set $S$ by just querying $g$ on $1_S$. 
Thus our algorithm could just query a polynomial number of coefficients of $g$ and try to find the largest one using these ``hints''. 
Unfortunately, one can show that every such approach fails, since there are examples where all the coefficients of $g$ in $\cB$ except one are equal to zero, hence it is not possible to find it using such a ``discrete search''. 
Because of that, one is forced to evaluate $g$ on ``nontrivial'' inputs to gain a more global view on the coefficients in $\cB$. 
When trying to achieve this goal one runs into the problem of estimating the contribution of monomials in $\cB$ to the value of $g$ at a queried point. 
In other words: there is no simple way to tell whether a large value of $g(x)$ at a given point $x$ comes from monomials in $\cB$ or from monomials outside of $\cB$, since typically the structure of $\cB$ is quite complicated.
From a high level view point our algorithm runs a global optimization process which internally uses $\cB-$capacity to estimate the contribution of monomials in $\cB$ at a current point. 
We obtain the following theorem.
\begin{theorem}[Optimization]\label{thm:max}
Let $\cB \subseteq {[m]\choose n}$ be any family of sets and let $g\in \R ^+[x_1, \ldots,x_m]$ be any real stable, $n-$homogeneous polynomial. Given access to a separation oracle for $P(\cB)=\conv\{1_S:S\in \cB\}$ and an evaluation oracle for $g$ there is a polynomial time algorithm which estimates the value of 
$$\max_{S\in \cB} g_S$$ 
up to a factor of $\mb \cdot e^n.$ 
In the case when $ g\in \R_1 ^+[x_1, \ldots,x_m]$, $\mb$ can be replaced by $\mlb$ in the bound above.
Further, $e^n$ above can be replaced by a smaller quantity  $A(\cB) = \max \inbraces{\sum_{S\in \cB} z^S: z\in P(\cB)}$ in both cases.
 
\end{theorem}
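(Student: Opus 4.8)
\noindent\textbf{Proof proposal for Theorem~\ref{thm:max}.}
Write $\opt := \max_{S\in\cB}g_S$ and $f(\theta) := \inf_{z>0} g(z)/z^\theta$, so that $\capa_\cB(g)=\sup_{\theta\in P(\cB)}f(\theta)$ and $f$ is log-concave on $P(\cB)$ (its logarithm is an infimum of affine functions of $\theta$). For $S\in\cB$, restricting $g$ to the coordinates of $S$ (setting $x_i=0$ for $i\notin S$) produces a real stable, $n$-homogeneous polynomial $g|_S$ with $\prod_{i\in S}x_i$-coefficient $g_S$; since $g$ is coordinatewise nondecreasing on $\R_{>0}^m$ one checks $f(1_S)=\capa(g|_S)$, whence Gurvits' theorem gives $g_S\le f(1_S)\le e^n g_S$, so $\max_{S\in\cB}f(1_S)\in[\opt,e^n\opt]$. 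The obstruction to the optimization problem is that the log-concave $f$ may take at interior points of $P(\cB)$ values enormously larger than at any vertex, so $\capa_\cB(g)=\sup_{P(\cB)}f$ by itself says nothing about $\opt$: the algorithm must reweight $g$ so that the vertex achieving $\opt$ becomes dominant.

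The technical heart is the following reweighting inequality. For $\theta\in P(\cB)$ with positive coordinates let $g^{(\theta)}(x):=g(\theta_1 x_1,\dots,\theta_m x_m)$; it is real stable, $n$-homogeneous, has nonnegative coefficients, and its $x^S$-coefficient is $\theta^S g_S$. The change of variables $w=(\theta_1 z_1,\dots,\theta_m z_m)$ in the definition of $\cB$-capacity gives the identity $\capa_\cB\inparen{g^{(\theta)}}=\sup_{\phi\in P(\cB)}\theta^\phi f(\phi)$. Taking $\phi=1_{S^\star}$ for an $S^\star$ achieving $\opt$ yields $\capa_\cB(g^{(\theta)})\ge\theta^{S^\star}f(1_{S^\star})\ge\theta^{S^\star}\opt$; and, by the definition of $\mb$ together with $\theta\in P(\cB)\subseteq[0,1]^m$,
$$\capa_\cB\inparen{g^{(\theta)}}\ \le\ \mb\sum_{S\in\cB}\theta^S g_S\ \le\ \mb\cdot\opt\cdot\sum_{S\in\cB}\theta^S\ \le\ \mb\cdot A(\cB)\cdot\opt .$$
Thus $\capa_\cB(g^{(\theta)})$ is an efficiently computable (Section~\ref{ssec:computability}) number lying in $[\,\theta^{S^\star}\opt,\ \mb\,A(\cB)\,\opt\,]$; as soon as $\theta^{S^\star}=\Omega(1)$ it is an $\mb\cdot A(\cB)$-approximation of $\opt$, and since $\sum_{S\in\cB}\theta^S\le e_n(\theta)\le\binom{m}{n}(n/m)^n\le n^n/n!\le e^n$ for $\theta$ in the hypersimplex (Maclaurin's inequality, $e_n$ the $n$-th elementary symmetric polynomial), it is also within $\mb\cdot e^n$ of $\opt$. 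For multilinear $g$ the same reasoning applies verbatim with $\mlb$ in place of $\mb$.

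It remains to find a weight vector $\theta\in P(\cB)$ with $\theta^{S^\star}$ bounded below, without knowing $S^\star$; this is done by iterative reweighting and is where the main difficulty lies. Start from $\theta^{(0)}$ in the relative interior of $P(\cB)$; given $\theta^{(t)}$, solve the convex program computing $\capa_\cB(g^{(\theta^{(t)})})$ together with an optimal dual point $\phi^{(t)}\in P(\cB)$ (using only the evaluation oracle for $g$ and the separation oracle for $P(\cB)$), and move towards it, e.g. $\theta^{(t+1)}:=\arg\max_{\theta\in P(\cB)}\sum_i\phi^{(t)}_i\log\theta_i$ (possibly damped); output $\max_t\capa_\cB(g^{(\theta^{(t)})})$. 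The upper bound $\mb\,A(\cB)\,\opt$ on the output is automatic since every $\theta^{(t)}\in P(\cB)$, and the degenerate case $\opt=0$ is handled separately (there $\capa_\cB$ already vanishes on all of $\cB$). The crux — and the bulk of the proof — is to show that within polynomially many iterations some $\theta^{(t)}$ satisfies $(\theta^{(t)})^{S^\star}=\Omega(1)$, i.e. that the reweighting concentrates on the optimal vertex; the natural tool is the convexity of $u\mapsto\log\capa_\cB\inparen{g^{(e^u)}}$ (a supremum of the affine functions $u\mapsto\langle u,\phi\rangle+\log f(\phi)$), which turns the reweighting into an alternating-maximization / mirror-descent scheme with a controllable convergence rate, the remaining subtleties being to preclude the process stalling at a suboptimal vertex and to control the bit-complexity of the iterates so that each convex program is solved to sufficient accuracy in polynomial time.
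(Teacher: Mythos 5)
Your relaxation is, after a change of variables, the same as the paper's: writing $\capa_\cB(g^{(\theta)})$ using the dual characterization (Lemma~\ref{lemma:dual_capa}) gives exactly $\inf\{g(\theta_1y_1,\dots,\theta_my_m):y>0,\ y^S\geq1\ \forall S\in\cB\}$, so $\sup_{\theta\in P(\cB)}\capa_\cB(g^{(\theta)})$ is precisely the program~\eqref{eq:max_relax}. Your sandwich is also essentially the paper's: the upper bound $\capa_\cB(g^{(\theta)})\le\mb\,A(\cB)\,\opt$ for all $\theta\in P(\cB)$ is their derivation, and (what you don't quite say explicitly, preferring instead to chase $\theta^{S^\star}=\Omega(1)$) evaluating at $\theta\to1_{S^\star}$ shows the supremum is at least $\opt$, since only $\phi=1_{S^\star}$ survives in your identity $\capa_\cB(g^{(\theta)})=\sup_\phi\theta^\phi f(\phi)$ and $f(1_{S^\star})\ge g_{S^\star}$. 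So the approximation-guarantee half of the theorem is in good shape.

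The genuine gap is the algorithm. You reparametrize $\theta=e^u$ and correctly note that $u\mapsto\log\capa_\cB(g^{(e^u)})$ is \emph{convex} (a sup of affine functions), which makes $\sup_\theta$ a convex-\emph{maximization} problem; you then fall back on an iterative-reweighting/alternating scheme and state openly that ``the crux --- and the bulk of the proof --- is to show'' it converges in polynomially many steps, which you do not do. The paper does not need any of this: the key structural fact you are missing is that for a real stable $g$ with nonnegative coefficients, $\log g$ is \emph{concave on} $\R_{>0}^m$, so $\theta\mapsto\log\capa_\cB(g^{(\theta)})=\inf_{z>0,\,z^S\ge1}\log g(\theta_1z_1,\dots,\theta_mz_m)$ is an infimum of functions concave in $\theta$ and hence itself concave in $\theta$ on $P(\cB)$. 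Equivalently, in the paper's parametrization, $f(x,w)=\log g(x_1e^{w_1},\dots,x_me^{w_m})$ is concave in $x$ and convex in $w$, i.e.\ $\sup_x\inf_w$ is a bona fide concave--convex saddle-point problem over polytopes with efficiently computable gradients (Fact~\ref{fact:gradient}), solvable directly by the ellipsoid method (Lemma~\ref{lemma:ellipsoid}). Without that log-concavity observation --- which is where the real stability hypothesis earns its keep on the algorithmic side, not just in the integrality-gap bound --- your proposal does not yield a polynomial-time algorithm, and the theorem's computational claim is unproved.
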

This theorem, when combined with our quantitative bounds from Theorem \ref{thm:bounds} imply $g$-independent approximate estimators for the optimization problem in a fairly general setting. 
Recovering a solution which attains such an approximation seems to require near-exact solutions for the counting problem -- even in the special case of Nikolov and Singh \cite{NS16}; see \cite{SV16}.
Before we give a sketch of the proof of this theorem, we state an important corollary  concerning maximizing subdeterminants subject to matroid constraints. 
Since the determinantal polynomials which appear in this setting are multilinear, we can use a bound on $\mlb$ in Theorem \ref{thm:max}. 
\begin{corollary}[Subdeterminant Maximization]\label{cor:subdet}
Let $L\in \R^{m\times m}$ be a PSD matrix and $\cB \subseteq {[m]\choose n}$ be a family of sets.  Given access to a separation oracle for $P(\cB)$ there is a polynomial time algorithm which estimates the value of
$\max_{S\in \cB} \det(L_{S,S})$
up to a factor of $\mlb \cdot A(\cB)$, where $A(\cB) = \max \inbraces{\sum_{S\in \cB} z^S: z\in P(\cB)}\leq e^n$.
\end{corollary}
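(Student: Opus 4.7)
The plan is to identify the multilinear $n$-homogeneous real stable polynomial whose coefficient on the monomial $x^S$ (for $|S|=n$) equals $\det(L_{S,S})$, and then invoke Theorem~\ref{thm:max}. Concretely, define
$$g(x) \;=\; \sum_{S\in\binom{[m]}{n}} \det(L_{S,S})\,x^S.$$
Then $g \in \R^+_1[x_1,\ldots,x_m]$ (principal minors of PSD matrices are nonnegative), $g$ is multilinear and $n$-homogeneous by construction, and $g_S = \det(L_{S,S})$. All that remains is to verify that $g$ is real stable and admits an efficient evaluation oracle.

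For real stability, write $L = V^\top V$ with $V \in \R^{r\times m}$, $r = \rank(L)$, and consider
$$F(y,x) \;=\; \det\!\inparen{yI_m + \mathrm{diag}(x)\,L}.$$
Applying the Sylvester-type identity $\det(yI_m+MN)=y^{m-r}\det(yI_r+NM)$ with $M=\mathrm{diag}(x)V^\top$ and $N=V$ yields $F(y,x) = y^{m-r}\det\!\inparen{yI_r + \sum_{i=1}^m x_i v_i v_i^\top}$. Since $I_r$ and each $v_iv_i^\top$ are PSD, the Borcea--Br\"and\'en theorem gives real stability of this determinantal factor in $(y,x_1,\ldots,x_m)$, and multiplying by $y^{m-r}$ preserves the property. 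Expanding $F$ as a polynomial in $y$,
$$F(y,x) \;=\; \sum_{k=0}^{m} y^{m-k}\sum_{|S|=k}\det(L_{S,S})\,x^S,$$
because the $k\times k$ principal minor of $\mathrm{diag}(x)L$ indexed by $S$ is $\det(L_{S,S})\prod_{i\in S}x_i$. Hence $g$ is the coefficient of $y^{m-n}$ in $F$ and can be extracted as $\tfrac{1}{(m-n)!}\partial_y^{m-n}F(y,x)\big|_{y=0}$. Partial differentiation preserves real stability, and substituting a real value into one variable of a real stable polynomial preserves real stability by Hurwitz's theorem (the limit of $F(y+i\epsilon,x)$ as $\epsilon\to 0^+$ is either identically zero or real stable). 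Since $g$ is nontrivial whenever $\max_{S\in\cB}\det(L_{S,S})>0$ (otherwise the statement is vacuous), $g$ is real stable.

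For the evaluation oracle, $g(x)$ equals the sum of $n\times n$ principal minors of $\mathrm{diag}(x)L$, i.e.\ the $n$-th elementary symmetric polynomial of the eigenvalues of $\mathrm{diag}(x)L$, which is computable in polynomial time from $L$ and $x$ (for instance by reading off the $y^{m-n}$-coefficient of the characteristic polynomial of the $m\times m$ matrix $\mathrm{diag}(x)L$, or via Newton's identities applied to its power traces). Feeding $g$ and $\cB$ into Theorem~\ref{thm:max} therefore produces a polynomial-time algorithm whose output estimates $\max_{S\in\cB} g_S = \max_{S\in\cB}\det(L_{S,S})$ within the factor $\mlb\cdot A(\cB)$: we use $\mlb$ in place of $\mb$ because $g$ is multilinear, and the improved factor $A(\cB)$ replaces $e^n$ by the last sentence of Theorem~\ref{thm:max}.

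The only real work lies in verifying that $g$ is real stable; once this is in hand, the corollary is a direct specialization of Theorem~\ref{thm:max}.
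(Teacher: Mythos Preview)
Your proposal is correct and follows essentially the same route as the paper: define $g(x)=\sum_{|S|=n}\det(L_{S,S})x^S$, verify that it is a multilinear $n$-homogeneous real stable polynomial with an efficient evaluation oracle, and then invoke Theorem~\ref{thm:max}. The only difference is cosmetic: the paper establishes real stability and evaluability directly via the Cholesky factor $L=VV^\top$ and the Cauchy--Binet identity $g(z)=\det(V^\top Z V)$ (Fact~\ref{fact:linear_stable}), whereas you reach the same conclusion by embedding $g$ as the $y^{m-n}$-coefficient of $\det(yI_m+\mathrm{diag}(x)L)$ and using closure of real stability under differentiation and real specialization.
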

By combining Theorem~\ref{thm:bounds} and the above corollary we can recover the $e^n-$estimation algorithm for maximizing subdeterminants under partition constraints by~\cite{NS16}\footnote{It is not hard to see that for $\cB$ corresponding to a partition matroid $\{U(P_j,b_j)\}_{j\in [p]}$ we have $A(\cB) \leq \prod_{j=1}^p \frac{b_j^{b_j}}{b_j!},$ so that $\mlb \cdot A(\cB) \leq e^n$.}. We also obtain new results, such as a $2^m \cdot e^{n}$-estimation for maximizing subdeterminants under spanning tree constraints and more generally under constraints defined by any regular matroid. 
The proof is provided in Section~\ref{ssec:maxdet}.

The  proof of Theorem~\ref{thm:max} appears in Section~\ref{ssec:convex_relax}. Below we discuss the key ideas.
We introduce a convex relaxation for the optimization problem $\max_{S\in \cB} g_S$. Perhaps the most natural choice for such a relaxation is
$\sup_{x \in P(\cB)}~ g(x_1, \ldots, x_m).$
While the above works well for the uniform case $\cB = {[m]\choose n}$, for nontrivial families $\cB$, it has an unbounded integrality gap.
Instead,  consider the polynomial $r(x)=g(x_1z_1, \ldots, x_m z_m)$ parametrized by $z>0$. 
We have $r_\cB=\sum_{S\in \cB} z^S g_S$. To avoid the influence of terms outside of $\cB$ one can try to maximize $r_\cB$ over $z\in P(\cB)$. Of course $r_\cB$ is not necessarily efficiently computable. But crucially, we know that $\capa_\cB(r)$ provides a good approximation to it. Hence finally we arrive at the following relaxation

$$\sup_{z \in P(\cB)}~~ \capa_\cB(g(x_1 z_1, \ldots, x_mz_m)).$$

\noindent
It is not a priori clear  whether it can be solved efficiently, since $\capa_\cB$ itself is a result of an optimization problem, we show that it reduces to a concave-convex saddle point problem which can be solved using convex optimization tools. Finally, let us note that the above gives a conceptually simple way of arriving at an analogous relaxation proposed in~\cite{NS16} in the context of subdeterminant maximization.\footnote{In fact the  term corresponding to capacity in~\cite{NS16} is slightly different, which causes their relaxation to have infinite integrality gap for most matroids.}

\subsection{Discussion}
To summarize, in this paper we present a unified convex programming framework and use it  to obtain nontrivial approximation guarantees for counting and optimization problems involving the very general setting of real stable polynomials and matroids.
This significantly extends the works of Gurvits \cite{Gurvits06} and Nikolov and Singh \cite{NS16}.
Our definition of capacity also makes sense in the case when the polynomial is not homogeneous and the family is arbitrary (not necessarily of the form $\cB \subseteq {[m] \choose n}$). We can even extend it to the case of $\cB \subseteq \N^m$ by enforcing $z^\alpha \geq 1$ for $\alpha \in \cB$ in the dual characterization of capacity (see~\ref{ssec:dual}). For this general case, however, we are not aware of any nontrivial sufficient conditions which guarantee that $\capa_\cB(g)$ approximates $g_\cB$ well.
Our bounds on  both $\mb$ and $\mlb$ are unlikely to be optimal. The problems seem to have better {\em structure} for multilinear polynomials and it is an interesting problem to see to what extent our bounds can be improved. 

\subsection{Organization of the Paper}
The remaining part of the paper is structured as follows. It starts with Preliminaries, containing basic definition and necessary background. Section~\ref{sec:counting} is devoted to the proof of Theorem~\ref{thm:finite} and Theorem~\ref{thm:bounds}.  The next Section~\ref{sec:counting} contains the proof of Theorem~\ref{thm:max} and Corollary~\ref{cor:subdet}. In the Section~\ref{sec:entropy} we provide an interpretation of $\cB-$capacity in terms of entropy. Next in Section~\ref{sec:examples} we provide examples when $\capa_\cB(g)$ does not provide a good approximation to $g_\cB$, in case when $g$ is not real stable or $\cB$ is not a family of bases of a matroid. Section~\ref{app:lower_bound} provides a lower bound on $\mb$ for the partition case. Finally in Section~\ref{app:partition} we derive a bound on $\mlb$ for partition matroids.

\section{Preliminaries}\label{sec:preliminaries}

\parag{Multivariate Polynomials. } We consider real polynomials in $m$ variables: $\R[x_1, \ldots, x_m]$. Every such polynomial can be written as a finite sum $g(x) = \sum_{\alpha \in \N^m} g_\alpha x^\alpha$, where $g_\alpha \in \R$ and $x^\alpha$ denotes the monomial $\prod_{i=1}^m x_i^{\alpha_i}$. The number $g_\alpha$ we call the coefficient of $x^\alpha$ in $g$. The degree of a monomial $x^\alpha$ is defined as $|\alpha|\defeq\sum_{i=1}^m \alpha_i$. We say that $p$ is $n-$homogeneous if $g_\alpha$ is nonzero only for degree $n$ monomials. For a set $S\subseteq [m]$ we often identify the multi-index $1_S$ (the characteristic vector of $S$) with simply $S$, i.e. $z^S = \prod_{i \in S}z_i$. Monomials of this form we call square-free monomials, while the remaining ones we call square monomials. A multiaffine (or square-free) polynomial is one which does contain only square-free monomials. The set of multiaffine polynomials is denoted by  $\R_1[x_1, \ldots, x_m]$. The set of polynomials with nonnegative coefficients is denoted by $\R^+[x_1, \ldots, x_m]$.

\vspace{2mm}
\parag{Real Stable Polynomials and Strongly Rayleigh Distributions. } A polynomial $g\in \R[x_1, \ldots, x_m]$ is called real stable if none of its roots $z=(z_1, \ldots, z_m) \in \C^m$ satisfies: $\Im(z_i)>0$ for every $i=1,2,\ldots, m$. An equivalent characterization is: $g\in \R[x_1, \ldots, x_m]$ is real stable if and only if for all vectors $u\in \R^m$ and $v\in \R_{> 0}^m$ the polynomial $f(t) = g(u_1+v_1t, \ldots, u_m+v_m t)$ is real-rooted. Real stable polynomials are closed under taking partial derivatives and under multiplication. A distribution $\mu$ over subsets of $[m]=\{1,2,\ldots, m\}$ is called Strongly Rayleigh if the polynomial $g(z) = \sum_{S\subseteq [m]} \mu(S) z^S$ is real stable. 

\vspace{2mm}
\parag{Matroids. } For a comprehensive treatment of matroid theory we refer the reader to~\cite{Oxley06}. Below we state the most important definitions and  examples of matroids, which are most relevant to our results. A matroid is a pair $(U, \cM)$ such that $U$ is a finite set and $\cM \subseteq 2^{U}$ satisfies the following three axioms: (1) $\emptyset \in \cM$, (2) if $S\in \cM$ and $S'\subseteq S$ then $S'\in \cM$, (3) if $A,B\in \cM$ and $|A|>|B|$, then there exists an element $a\in A\setminus B$ such that $B\cup \{a\} \in \cM$. The collection $\cB \subseteq \cM$ of all inclusion-wise maximal elements of $\cM$ is called the set of bases of the matroid. It is known that all the sets in $\cB$ have the same cardinality, which is called the rank of the matroid. 
	
Given a matroid $\cM \subseteq 2^{U}$ with a set of bases $\cB$ we define another collection of sets $\cB^\star \subseteq 2^{U}$ to be  $\cB^\star := \{ U\setminus S: S\in \cB\}$. Then $\cB^\star$ can be shown to be a collection of bases of another matroid $\cM^\star$, called the dual of $\cM$. 

\vspace{2mm}
\parag{Linear and Strongly Rayleigh Matroids. } We say that a matroid $\cM \subseteq 2^{[m]}$ is $\R-$linear if there exists a matrix $V\in \R^{m\times n}$ (with rows $v_1, v_2, \ldots, v_m \in \R^n$) such that for every set $S\subseteq [m]$ we have $S\in \cM$ if and only if the collection of vectors $\{v_j: j\in S\}$ is linearly independent over $\R$. Such a matrix $V$ we call an $\R-$representation of the matroid $\cM$. If $\cB$ is a set of bases of $\cM$ and $V \in \R^{m\times n}$ is a representation, we define the unbalance of $V$ to be $\un(V) = \max \inbraces{\frac{\det(V_S^\top V_S)}{\det(V_T^\top V_T)}: S,T\in \cB}$, where $V_S$ is an $|S| \times n$ submatrix of $V$ corresponding to rows from $S$. For an $\R-$linear matroid $\cM$ with set of bases $\cB$ we define $\un(\cM)$ (or equivalently $\un(\cB)$) to be the minimum $\un(V)$ over all $\R-$representations $V$ of this matroid. 

Matroids $\cM$ which have a totally unimodular $\R$-representation are called regular, in such a case $\un(\cM)=1$. 

A matroid $\cM$ with a set of bases $\cB$ is called strongly Rayleigh if the polynomial $g(z)=\sum_{S\in \cB} z^S$ is real stable. Regular matroids are examples of strongly Rayleigh matroids.

\vspace{2mm}
\parag{Examples of Matroids. } If $U=[m]$ and $n\leq m$ then the collection of sets $\cB = {[m] \choose n} = \{S \subseteq [m]: |S|=n\}$ is a set of bases of the so called uniform matroid, which we denote by $U(m,n)$.  If $U=[m]$ and a partition of $[m]$ into non-empty subsets $P_1, P_2, \ldots, P_p$ is given together with numbers $b_1, b_2, \ldots, b_p$, then the collection of sets $\cB=\{S: |S\cap P_j|=b_j \mbox{ for all } j=1,2, \ldots, p\}$ is a family of bases of a partition matroid, which we denote by $\{U(P_j, b_j)\}_{j\in [p]}$. 

If $G$ is an undirected graph with edges labeled by $[m]$, then we can define its graphic (or spanning tree) matroid as follows: the universe is $[m]$ and a set $S\subseteq [m]$ is a basis if and only if $S$ corresponds to a spanning tree in $G$. Graphic matroids are regular.

\section{Counting}\label{sec:counting}
\subsection{An Inequality on $\cB-$Capacity}\label{sec:capa}
In this section we first propose a generalization of the notion of capacity, which was initially introduced and studied  by Gurvits~\cite{Gurvits06}. 

\begin{definition}
Consider  an $m-$variate polynomial  $g\in \R^+[z_1,\ldots, z_m].$ Let $\cB \subseteq {[m] \choose n}$ be any family of sets. We define the $\cB-$capacity of $g$ to be:
$$\capa_{\cB}(g)=\sup_{\theta \in P(\cB)} \inf_{z>0} \frac{g(z)}{\prod_{i=1}^m z_i^{\theta_i}},$$
where $P(\cB)=\conv\{ 1_S: S\in \cB\} \subseteq [0,1]^m$. The lower $\cB-$capacity of $g$ is:
$$\dcapa_{\cB}(g) = \inf_{\theta \in P(\cB)} \inf_{z>0} \frac{g(z)}{\prod_{i=1}^m z_i^{\theta_i}}.$$
\end{definition}
 It is not hard to see that for the setting when $m=n$ and $\cB=\{\{1,2,\ldots, m\} \}$ one recovers the familiar capacity defined in~\cite{Gurvits06}. The main goal of this section is to provide an extension of Gurvits' result which asserts that $\capa(g)\leq \frac{m^m}{m!}g_{[m]}$ (where $g_{[m]}$ is the coefficient of $z^{[m]}$ in $g$) under the assumption that $g$ is $m-$homogeneous, real stable and has nonnegative coefficients. 
 
One of the crucial ingredients of our extension of~\cite{Gurvits06} is the following inequality which ties together the classical capacity and the ones we introduced here.

\begin{lemma}\label{lemma:capa}
Let $g,h\in \R^+[z_1, \ldots, z_m]$ be $m-$variate polynomials, $\cB \subseteq {[m] \choose n}$ be any family of $n-$subsets of $[m]$ and $\cB^\star=\{[m]\setminus S: S\in \cB\}$ be its dual. Then
$$\capa(g\cdot h) \geq \capa_{\cB}(g) \cdot \dcapa_{\cB^\star} (h).$$
\end{lemma}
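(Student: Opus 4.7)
The plan is to exploit the involutive duality $\theta \mapsto \mathbf{1}-\theta$, which carries $P(\cB)$ into $P(\cB^\star)$, combined with a factorization of $g \cdot h$ that becomes exact precisely on the slice $\prod_i z_i = 1$ appearing in the definition of the classical capacity. First I would verify the duality on extreme points: for each $S \in \cB$, $\mathbf{1}_{[m]} - \mathbf{1}_S = \mathbf{1}_{[m]\setminus S}$ lies in $\cB^\star$ by definition, so taking convex combinations yields $\theta \in P(\cB) \implies \mathbf{1}-\theta \in P(\cB^\star)$.

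Next, for an arbitrary fixed $\theta \in P(\cB)$ and any $z>0$ with $\prod_i z_i = 1$, I would use the identity
$$g(z)\,h(z) \;=\; \frac{g(z)}{\prod_i z_i^{\theta_i}} \cdot \frac{h(z)}{\prod_i z_i^{1-\theta_i}},$$
which holds because $\prod_i z_i^{\theta_i}\cdot \prod_i z_i^{1-\theta_i} = \prod_i z_i = 1$. Taking the infimum over $z > 0$ with $\prod_i z_i = 1$ on both sides, and using the elementary bound $\inf(FG) \geq (\inf F)(\inf G)$ for nonnegative $F, G$, I would deduce
$$\capa(gh) \;\geq\; \Bigl(\inf_{z>0,\,\prod_i z_i=1}\frac{g(z)}{\prod_i z_i^{\theta_i}}\Bigr) \cdot \Bigl(\inf_{z>0,\,\prod_i z_i=1}\frac{h(z)}{\prod_i z_i^{1-\theta_i}}\Bigr).$$

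Then I would drop the normalization constraint on each factor separately. Since enlarging the feasible region only decreases the infimum, each of the two factors can only shrink, and the inequality is preserved. Because $\mathbf{1}-\theta \in P(\cB^\star)$, the resulting second factor is at least $\dcapa_{\cB^\star}(h)$, uniformly in $\theta$. Finally, taking the supremum over $\theta \in P(\cB)$ on the first factor produces $\capa_\cB(g)$, which yields the claim.

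The main subtlety—really the only delicate point in the argument—is that the factorization identity holds only on the slice $\prod_i z_i = 1$, whereas the definitions of $\capa_\cB$ and $\dcapa_{\cB^\star}$ use infima over the entire positive orthant. The saving grace is that relaxing the constraint goes in the favorable direction (inf over a larger set is smaller), so each step preserves the $\geq$. Notably, the argument requires no homogeneity assumption on $g$ or $h$.
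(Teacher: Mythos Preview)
Your proof is correct and follows essentially the same route as the paper's: the duality $\theta \mapsto \mathbf{1}-\theta$ between $P(\cB)$ and $P(\cB^\star)$, the factorization splitting $\prod_i z_i$ into $\prod_i z_i^{\theta_i}\cdot\prod_i z_i^{1-\theta_i}$, the bound $\inf(FG)\ge(\inf F)(\inf G)$, and the final sup over $\theta$. The only cosmetic difference is that the paper works directly with the unconstrained form $\inf_{z>0}\frac{g(z)h(z)}{\prod_i z_i}$ (where the factorization holds for all $z>0$, so no constraint needs to be dropped), whereas you start from the constrained definition and then relax; both are equally valid and require no homogeneity.
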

\begin{proof}
We have, for every $\theta \in [0,1]^m$ that
\begin{equation}\label{eq:inf_inf}
\capa(g\cdot h)=\inf_{z>0} \frac{g(z)\cdot h(z)}{\prod_{i=1}^m z_i} \geq \inf_{z>0} \frac{g(z)}{\prod_{i=1}^m z_i^{\theta_i}} \cdot  \inf_{z>0} \frac{h(z)}{\prod_{i=1}^m z_i^{1-\theta_i}}.
\end{equation}
Note now that whenever $\theta \in P(\cB)$ then $(1-\theta) \in P(\cB^\star)$. To prove it, let $\theta = \sum_{S\in \cB} \alpha_S 1_S$ for some $\{\alpha_S\}_{S\in \cB}$ with $\alpha\geq 0$ and $\sum_{S}\alpha_S=1$. Then:
$$(1-\theta)=\sum_{S\in \cB} \alpha_S(1-1_S) = \sum_{S\in \cB}\alpha_S 1_{\bar{S}}=\sum_{S\in \cB^\star }\alpha_{\bar{S}} 1_S\in P(\cB^\star).$$
By minimizing the second factor in the RHS of~\eqref{eq:inf_inf} over $\theta \in P(\cB)$ we obtain the following:
$$\capa(g\cdot h) \geq \inf_{z>0} \frac{g(z)}{\prod_{i=1}^m z_i^{\theta_i}} \cdot \inf_{\tau \in P(\cB^\star)} \inf _{z>0} \frac{h(z)}{\prod_{i=1}^m z_i^{\tau_i}},$$
for every fixed $\theta \in P(\cB)$. By maximizing the RHS of the above with respect to $\theta$, we finally arrive at:
$$\capa(g\cdot h) \geq \capa_{\cB}(g) \cdot \dcapa_{\cB^\star} (h).$$
\end{proof}

\noindent Since we are interested in proving an upper bound on the $\cB-$capacity we will apply the Lemma~\ref{lemma:capa} in the following way:
$$\capa_{\cB}(g) \leq  \frac{\capa(g\cdot h)}{\dcapa_{\cB^\star} (h)}.$$
Thus the task of upper bounding the capacity boils down to finding an appropriate polynomial $h$, which allows us to relate $\capa(g\cdot h)$ in the RHS to the sum of coefficients $g_{\cB}$. There is some freedom in the choice of $h$, hence one can set the second goal to make the lower capacity $\dcapa_{\cB^\star}$ as large as possible.
 
 Below we provide a definition which makes precise which properties of $h$ are relevant.

\begin{definition}
Let $\cB \subseteq {[m] \choose n}$ and let $h\in \R^+[z_1, \ldots, z_m]$ be an $n-$homogeneous polynomial. We call $h$ a $\cB$-selection if it satisfies the following two conditions
\begin{enumerate}
\item For $S\in {[m] \choose n}$, $h_S>0$ only if $S\in \cB$,
\item $h_{S}=1$ for every $S\in \cB$.
\end{enumerate}
We say that $h$ is a $c-$approximate $\cB$-selection if it satisfies (1) and (2) is replaced by $h_S \in [1,c]$ for every $S\in \cB$ (here $c\geq 1$ is any number).
\end{definition}
Note that $h$ is not assumed to be squarefree, and hence importantly there is quite a lot of flexibility in the choice of a $\cB$-selection.
We are now ready to state and prove the main technical result of this section, which relates $\capa_\cB(p)$ to $p_\cB$, the precision of this approximation depends on the quality of a $\cB^\star$-selection (its lower capacity) one can provide. 

\begin{lemma}\label{lemma:approx}
Let $g\in \R^+[z_1, \ldots, z_m]$ be a real stable $n-$homogeneous polynomial and $\cB \subseteq {[m] \choose n}$ be any family of sets. For every real stable $\cB^\star$-selection $h\in \R^+[z_1, \ldots, z_m]$ we have:
$$\capa_{\cB}(g) \cdot \dcapa_{\cB^\star}(h)\cdot  \frac{ \cdot m!}{ m^m}  \leq  g_\cB  \leq \capa_{\cB}(g).$$
Moreover, if $g,h\in \R_{1}^+[z_1, \ldots, z_m]$, then the term $\frac{m!}{m^m}$ in the bound above can be replaced by $2^{-m}$.
\end{lemma}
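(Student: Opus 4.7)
The plan is to prove the sandwich by handling the two inequalities separately, with the easy upper bound $g_\cB \leq \capa_\cB(g)$ done by weighted AM--GM, and the lower bound by chaining Lemma~\ref{lemma:capa} with Gurvits' classical capacity inequality applied to the product $gh$.

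For the upper bound, I would use the weighted AM--GM inequality with weights coming from the coefficients themselves. Setting $T = g_\cB$ (assume $T>0$; otherwise the inequality is trivial), define $\theta := T^{-1} \sum_{S \in \cB} g_S \, 1_S$, which lies in $P(\cB)$ as a convex combination of the vertices $\{1_S : S \in \cB\}$. Weighted AM--GM gives $\prod_{S \in \cB} (z^S)^{g_S/T} \leq \sum_{S \in \cB} (g_S/T)\, z^S$, and since $g$ has nonnegative coefficients the right hand side is at most $g(z)/T$. Rearranging yields $T \leq g(z)/z^\theta$ for every $z > 0$, so $T \leq \inf_{z>0} g(z)/z^\theta \leq \capa_\cB(g)$. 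This argument is insensitive to multilinearity, so it already gives the upper bound in both regimes.

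For the lower bound, I would apply Lemma~\ref{lemma:capa} to get $\capa_\cB(g)\cdot \dcapa_{\cB^\star}(h) \leq \capa(gh)$, then bound $\capa(gh)$ from above via Gurvits. Because $h$ is a $\cB^\star$-selection and $\cB^\star \subseteq \binom{[m]}{m-n}$, $h$ is $(m-n)$-homogeneous; combined with $g$ being $n$-homogeneous this makes $gh$ an $m$-homogeneous polynomial. Real stability is closed under products, so $gh$ is real stable, nonnegative, and $m$-homogeneous, and Gurvits gives $\capa(gh) \leq \tfrac{m^m}{m!} (gh)_{[m]}$. Now I would compute $(gh)_{[m]}$ explicitly: the only pairs $(\alpha,\beta)$ with $\alpha+\beta = \mathbf{1}_{[m]}$ have $\alpha,\beta \in \{0,1\}^m$ with disjoint supports, so writing $S = \supp \alpha$ we get $(gh)_{[m]} = \sum_{|S|=n} g_S\, h_{[m]\setminus S}$. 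By the defining properties of a $\cB^\star$-selection, $h_{[m]\setminus S} = 1$ if $S \in \cB$ and $0$ otherwise, so $(gh)_{[m]} = g_\cB$. Chaining the bounds gives $\capa_\cB(g)\cdot \dcapa_{\cB^\star}(h) \cdot \tfrac{m!}{m^m} \leq g_\cB$.

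For the multiaffine strengthening, the structural part (coefficient extraction and the use of Lemma~\ref{lemma:capa}) goes through unchanged, so only the Gurvits step needs to be improved: one replaces $\capa(f) \leq \tfrac{m^m}{m!}\,f_{[m]}$ by the sharper bound $\capa(f) \leq 2^m f_{[m]}$ available when $f$ arises as a product of two homogeneous multiaffine real stable polynomials with nonnegative coefficients. The main obstacle I foresee is precisely this step: $gh$ itself is not multiaffine even when both $g$ and $h$ are (overlapping supports of $S$ and $T$ create square terms), so one cannot simply cite a multiaffine Gurvits inequality applied to $gh$. I would get around this either by polarizing $h$ into a $2m$-variable bi-multiaffine real stable polynomial and working with the bihomogeneous capacity, or by projecting $gh$ onto its multiaffine part and tracking the contribution to the $z_1 \cdots z_m$ coefficient (which, crucially, is untouched by the projection and still equals $g_\cB$). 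Once the improved multiaffine capacity inequality is in place, the rest of the calculation is identical to the general case.
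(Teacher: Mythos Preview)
Your argument for the general case is essentially the paper's: the lower bound via Lemma~\ref{lemma:capa} and Gurvits applied to the $m$-homogeneous real stable product $gh$, together with the identification $(gh)_{[m]}=g_\cB$, matches exactly. For the upper bound you give a direct AM--GM proof, whereas the paper defers to the dual characterization $\capa_\cB(g)=\inf\{g(z):z>0,\ z^S\ge 1\ \forall S\in\cB\}$, from which $g(z)\ge\sum_{S\in\cB}g_S z^S\ge g_\cB$ is immediate; both are fine.

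The gap is in the multiaffine strengthening. Neither of your proposed fixes works cleanly: projecting $gh$ onto its multiaffine part leaves only the single monomial $z_1\cdots z_m$ (since $gh$ is $m$-homogeneous in $m$ variables), so you learn nothing beyond $(gh)_{[m]}$ itself and in any case the inequality $\capa(\mathrm{MAP}(gh))\le\capa(gh)$ goes the wrong way; polarization is not obviously going to produce the constant $2^m$. The paper's route is much simpler and avoids the issue entirely: it invokes the \emph{degree-refined} form of Gurvits' inequality, $\capa(f)\le f_{[m]}\prod_{i=1}^m\bigl(\tfrac{d_i}{d_i-1}\bigr)^{d_i-1}$, where $d_i$ is the degree of $z_i$ in $f$. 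When $g,h$ are both multiaffine, each $z_i$ has degree at most $2$ in $gh$, so every factor is at most $2$ and the product is at most $2^m$. No multilinearity of $gh$ is needed.
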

\begin{proof}
Consider the polynomial $g(z)\cdot h(z)$ which is real stable, as a product of real stable polynomials. Note that from our assumptions it is homogeneous of degree $m$. We apply Gurvits' inequality~\cite{Gurvits06} to $g\cdot h.$ Let $s\in \R$ be the coefficient of $\prod_{i=1}^m z_i$ in $g\cdot h$, then
$$\capa(g\cdot h) \leq \frac{m^m}{m!} s.$$
Since $h$ is a $\cB^\star -$selection, it is not hard to see that:
$$s=\sum_{S\in \cB} g_S = g_\cB,$$
because the only pairs of monomials from $g$ and $h$ which contribute to $s$ are of the form $x^S$ and $x^{[m]\setminus S}$. By combining this with Lemma~\ref{lemma:capa}, we obtain
$$\capa_{\cB}(g)\cdot \dcapa_{\cB^\star}(h)\leq \capa(g \cdot h) \leq g_\cB\frac{ m!}{ m^m}.$$

To obtain the improved bound under the assumption that $g,h\in \R_1^+[z_1, \ldots, z_m]$ we observe that in the reasoning above, the degree of every variable in the polynomial $g\cdot h$ is at most $2$. Hence we can apply a stronger form of Gurvits' inequality~\cite{Gurvits06}, where $\frac{m^m}{m!}$ is replaced by $\prod_{i=1}^m \inparen{\frac{d_i}{d_i-1}}^{d_i-1}\leq 2^m$, where $d_i$ is the degree of $z_i$ in $g\cdot h$.

This provides us with the LHS of the inequality. The right hand side follows easily from the dual characterization of $\cB-$capacity provided in Lemma~\ref{lemma:dual_capa}.
\end{proof}

\begin{remark}\label{remark:approx_selec}
The above lemma, still holds (with the same proof) when we assume $h$ to be only a $c$-approximate $\cB^\star$-selection. The only required modification is to divide the LHS of the lower bound inequality by $c$. 
\end{remark}

\subsection{Dual characterization of $\cB-$Capacity}\label{ssec:dual}
The way $\capa_\cB(g)$ is defined makes it well suited for proving lower bounds on $g_\cB$. In the following lemma we provide an equivalent, dual characterization, which gives a straightforward upper bound and is often preferred from the computational viewpoint. For the case when $\cB = {[m] \choose n}$ this was observed by~\cite{AOSS16}.

\begin{lemma}[Equivalent definition of $\cB-$capacity]\label{lemma:dual_capa}
Let $g\in \R^+[z_1, \ldots, z_m]$ be an $n-$homogeneous polynomial and $\cB \subseteq {[m] \choose n}$ be any family of sets. Then:
$$\capa_\cB(g) = \inf \{g(z): z>0, z^S \geq 1 \mbox{ for all }S\in \cB\}.$$
\end{lemma}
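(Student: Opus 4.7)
The plan is to pass to logarithmic coordinates and recognize both sides as dual formulations of the same convex program. Writing $y_i = \log z_i$ and setting $f(y) \defeq \log g(e^{y_1}, \ldots, e^{y_m})$, the function $f$ is convex on $\R^m$ as a log-sum-exp of linear forms (assuming $g \not\equiv 0$; the zero case is trivial). Since $\log$ is monotone, the claimed identity reduces to showing
\[
\sup_{\theta \in P(\cB)} \inf_{y\in \R^m}\bigl(f(y) - \langle \theta, y\rangle\bigr) \;=\; \inf\bigl\{ f(y) : \langle 1_S, y\rangle \geq 0 \text{ for all } S\in \cB\bigr\}.
\]

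First I would treat the right-hand side as a convex minimization with linear inequality constraints and invoke strong Lagrangian duality. The objective is convex, the constraints are affine, and the feasible region is nonempty (take $y = c\mathbf{1}$ for large $c>0$), so there is no duality gap. This yields
\[
\inf\{f(y) : \langle 1_S, y\rangle \geq 0 \ \forall S\in\cB\} \;=\; \sup_{\lambda \in \R_{\geq 0}^\cB}\, \inf_{y\in\R^m}\Bigl(f(y) - \bigl\langle \textstyle\sum_{S\in \cB} \lambda_S\, 1_S,\, y\bigr\rangle\Bigr).
\]
Setting $\theta \defeq \sum_{S\in\cB} \lambda_S 1_S$, the dual value depends on $\lambda$ only through $\theta$, and as $\lambda$ varies over $\R_{\geq 0}^\cB$, the vector $\theta$ ranges over the conic hull of $\{1_S : S\in \cB\}$.

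The final step uses $n$-homogeneity of $g$, which translates to $f(y + c\mathbf{1}) = f(y) + cn$. Substituting $y \mapsto y + c\mathbf{1}$ shifts $f(y) - \langle \theta, y\rangle$ by $c(n - \|\theta\|_1)$, so the inner infimum is $-\infty$ unless $\|\theta\|_1 = n$; this argument also discards any dual $\theta$ for which the objective is unbounded below. Among nonnegative combinations of the $1_S$ (each with $|S|=n$), the normalization $\|\theta\|_1 = n$ is exactly $\sum_S \lambda_S = 1$, which restricts $\theta$ to $P(\cB)$. Hence the supremum over the conic hull collapses to a supremum over $P(\cB)$, matching the LHS.

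The only subtle point I anticipate is justifying strong duality cleanly; because the constraints are polyhedral and $f$ is a proper closed convex function that is finite on all of $\R^m$, the standard polyhedral strong duality theorem (e.g., Rockafellar, \emph{Convex Analysis}, Thm.~28.2) applies without requiring a Slater-type interior point. Exponentiating both sides of the resulting identity recovers the original statement about $\capa_\cB(g)$.
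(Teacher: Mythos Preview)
Your proposal is correct and follows essentially the same route as the paper: pass to logarithmic coordinates, apply Lagrangian duality to the constrained convex minimization, and use $n$-homogeneity of $g$ to force the dual multipliers to be normalized so that $\theta$ ranges exactly over $P(\cB)$. The only cosmetic difference is that the paper extracts the normalization $\sum_{S}\mu_S=1$ from the first-order optimality conditions (summing the partial derivatives and invoking Euler's identity), whereas you obtain it by the direct substitution $y\mapsto y+c\mathbf{1}$; both exploit homogeneity in the same way, and your appeal to polyhedral strong duality in place of Slater is a harmless variant.
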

\begin{proof}
We depart from the following convex program:
\begin{equation}
\begin{aligned}
	\inf_{y\in \R^m}  ~~ &  \log g(e^y),&\\
	\st ~~ &  \sum_{i\in S} y_i\geq 0,& S\in \cB.\\
\end{aligned}
\label{eq:conv_prog}	
\end{equation}
By $g(e^y)$ in the above we mean $g(e^{y_1}, \ldots, e^{y_m})$. The objective is simply
$$y \mapsto \log \inparen{\sum_{\alpha} g_\alpha e^{\inangle{\alpha, y}}}.$$
Such a function (for nonnegative $g_\alpha$) is well known to be convex (which follows from H\"older's inequality). 

Note that Slater's condition for~\eqref{eq:conv_prog} is satisfied, hence strong duality holds. In order to derive the dual of the convex program~\eqref{eq:conv_prog} introduce multipliers $\mu_S\geq 0$ for every $S\in \cB$ and consider the Lagrangian
$$L(y,\mu) = \log g(e^y) -\sum_{S\in \cB} \mu_S \sum_{i\in S} y_i.$$
By taking the derivative with respect to $y_i$ and equating to zero, we obtain the following optimality condition:
$$e^{y_i}\cdot\frac{ \frac{\partial }{\partial z_i}g(z)\vert_{z=e^y}}{g(e^y)} =\sum_{S\ni i} \mu_S.$$
By summing up all these conditions for $i=1,2,\ldots, m$ we obtain $n$ on the left hand side (because $g$ is homogeneous) and $n\cdot \sum_{S\in \cB} \mu_S$ on the right. Hence, at optimality $\sum_{S\in \cB} \mu_S=1$. From strong duality, we obtain:
$$\max_{\substack{\mu\geq 0,\\\sum_{S\in \cB} \mu_S=1}}\min_{y\in \R^n}\inparen{ \log g(e^y) -\sum_{S\in \cB} \mu_S \sum_{i\in S} y_i }= \min_{\substack{ \sum_{i\in S}y_i \geq 0\\ \forall S\in \cB}} \log g(e^y) . $$
It remains to observe that $\sum_{S\in \cB} \mu_S \sum_{i\in S} y_i=\sum_{i=1}^m y_i \sum_{S\ni i}\mu_S$, hence what really matters are the marginals $\theta_i = \sum_{S\ni i} \mu_S$ and not the probability distribution $\mu$ itself. For this reason one can rewrite the above equality as:
$$\max_{\theta \in P(\cB)} \min_{y\in \R^n} \inparen{ \log g(e^y) -\sum_{i=1}^m y_i \theta_i}= \min_{\substack{ \sum_{i\in S}y_i \geq 0\\ \forall S\in \cB}} \log g(e^y).$$
The lemma follows by replacing $e^y$ by $z$ and taking exponentials on both sides.
\end{proof}

\subsection{Computability of $\cB-$Capacity}\label{ssec:computability}
In this section we prove that $\capa_\cB(g)$ can be efficiently evaluated whenever an evaluation oracle for $g$ and a separation oracle for $P(\cB)$ are provided. We apply the ellipsoid method to give a polynomial time algorithm. Before we give the proof, let us first establish an important fact, which will be useful later.

\begin{fact}\label{fact:gradient}
If $g\in \R[z_1, \ldots, z_m]$ is an $n-$homogeneous polynomial, and an evaluation oracle for $g$ is provided, then for every $i=1,2,\ldots, m$ and for every point $w \in \R^m$ the derivative $\frac{\partial}{\partial z_i} g(z)$ can be computed using at most $O(n)$ evaluations of $g$. 
\end{fact}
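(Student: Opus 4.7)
The plan is to exploit $n$-homogeneity: since $g$ has total degree $n$, the degree of the variable $z_i$ in $g$ is at most $n$. Therefore, after fixing all other coordinates to $w_j$ for $j\neq i$, the univariate restriction
$$f(t)\defeq g(w_1,\ldots,w_{i-1},t,w_{i+1},\ldots,w_m)$$
is a polynomial in $t$ of degree at most $n$, and $\frac{\partial g}{\partial z_i}(w)=f'(w_i)$. So it suffices to reconstruct (enough information about) the univariate polynomial $f$ from oracle queries to $g$, and then differentiate.

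First I would pick any $n+1$ distinct real nodes $t_0,t_1,\ldots,t_n$ (for concreteness, $t_k=w_i+k$). For each $k$ I would query the evaluation oracle at the point $(w_1,\ldots,w_{i-1},t_k,w_{i+1},\ldots,w_m)$, which returns $f(t_k)$. This costs exactly $n+1=O(n)$ oracle calls.

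Next, since a polynomial of degree at most $n$ is uniquely determined by its values at $n+1$ distinct points, Lagrange interpolation gives
$$f(t)=\sum_{k=0}^n f(t_k)\prod_{j\neq k}\frac{t-t_j}{t_k-t_j}.$$
Differentiating this expression symbolically with respect to $t$ and evaluating at $t=w_i$ yields $f'(w_i)=\frac{\partial g}{\partial z_i}(w)$ as an explicit linear combination of the already-computed values $f(t_k)$ with coefficients depending only on the nodes $t_0,\ldots,t_n$ and on $w_i$. No additional calls to the oracle for $g$ are needed, so the total count remains $n+1=O(n)$.

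There is no real obstacle: the only mild subtlety is the numerical conditioning of the interpolation (which can be controlled by choosing well-separated nodes, e.g., integer shifts of $w_i$), but since the statement counts only oracle calls and treats arithmetic as exact (as is standard in this framework), this is not an issue. The same procedure, repeated for each $i\in[m]$, yields the full gradient in $O(mn)$ evaluations, which is what the subsequent ellipsoid-based algorithm for $\capa_\cB(g)$ will require.
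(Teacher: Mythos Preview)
Your proposal is correct and is essentially identical to the paper's own argument: both fix all coordinates but $z_i$ to obtain a univariate polynomial of degree at most $n$, recover it from $n+1$ evaluations by interpolation, and then differentiate. The paper phrases this as interpolating the coefficients $r_k(w_2,\ldots,w_m)$ in the expansion $g(z)=\sum_{k=0}^n z_1^k r_k(z_2,\ldots,z_m)$, which is exactly your $f$.
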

\begin{proof}
Without loss of generality consider the computation of $\frac{\partial}{\partial z_1} g(z)\vert_{z=w}$. Let us write $g(z)=\sum_{k=0}^n z_1^k r_k(z_2, \ldots, z_m)$, where $r_k$ is  a polynomial in the remaining variables $z_2, \ldots, z_m$. We need to compute $\sum_{k=1}^n k w_1^{k-1} r_k(w_2, \ldots, w_m)$. To this end, we can perform a univariate interpolation to calculate $r_k(w_2, \ldots, w_m)$ for every $k=0,1,\ldots, n$ and then just output the required result.
\end{proof}

\noindent When working with evaluation oracles, even if we do not have direct access to the coefficients, we need to know some upper bound on their description length, to state running time bounds. If the coefficients of $g$ are integers, such a bound can be evaluated algorithmically by just querying $g(1,1,\ldots, 1)$ (recall that we assume all the coefficients of $g$ to be nonnegative).

\begin{lemma}\label{lemma:computability}
If $g\in \R^+[z_1, \ldots, z_m]$ is an $n-$homogeneous polynomial given by an evaluation oracle and $\cB \subseteq {[m] \choose n}$ is a family of sets given by a separation oracle for $P(\cB)$, then $\capa_\cB(g)$ can be computed up to a multiplicative precision of $(1+\eps)$ in time polynomial in $m$, $\log \frac{1}{\eps}$ and $L$, where $L$ is an upper bound on the description size of coefficients of $g$.
\end{lemma}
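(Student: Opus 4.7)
The plan is to solve the convex program provided by the dual characterization (Lemma~\ref{lemma:dual_capa}) using the ellipsoid method. Substituting $y_i = \log z_i$, the problem becomes
$$\min_{y\in\R^m}\, \log g(e^y) \quad \text{subject to} \quad \sum_{i\in S} y_i \geq 0 \text{ for all } S\in \cB.$$
The objective is a log-sum-exp function with nonnegative weights, hence convex (this was used in the proof of Lemma~\ref{lemma:dual_capa}), and the feasible region is a polyhedral (hence convex) cone. I will show that the three ingredients required by the standard ellipsoid-based convex optimization framework (separation oracle for the feasible set, first-order access to the objective, and a priori bounds on the location of the optimum) can all be supplied within the claimed running time.

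For the separation oracle: given a query $y$, I need to either certify feasibility or produce a violated constraint, which amounts to evaluating $\min_{S\in\cB} \sum_{i\in S} y_i = \min_{x\in P(\cB)} \langle y, x\rangle$. This linear optimization over $P(\cB)$ can be performed in polynomial time from the given separation oracle via the equivalence of separation and optimization of Grötschel--Lovász--Schrijver; the minimizing $S^\star$ either certifies $y$ is feasible or yields a violated halfspace. For the objective, the $i$-th coordinate of the gradient at $y$ is $e^{y_i}\partial_i g(e^y)/g(e^y)$, and by Fact~\ref{fact:gradient} each $\partial_i g$ at a prescribed point can be computed with $O(n)$ evaluations of $g$; together with the evaluation oracle this gives efficient computation of both the value and subgradient of $\log g(e^y)$.

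The delicate step is localizing the optimum to a box of polynomial radius so that the ellipsoid method converges in $\poly(m,L,\log(1/\eps))$ iterations. Here I would exploit the $n$-homogeneity: the identity $g(e^{y+c\mathbf 1}) = e^{cn}g(e^y)$ shows that translating along $\mathbf 1$ shifts the objective additively and preserves feasibility for $c\geq 0$, so the infimum is attained on the boundary where some constraint $\sum_{i\in S^\star} y_i = 0$ is tight; this lets me normalize $y$ so that $\sum_i y_i$ is controlled. Combining this with the coordinate-wise lower bounds $\log g(e^y) \geq \log g_\alpha + \langle \alpha, y\rangle$ for any $g_\alpha > 0$, with $|\alpha|=n$ and $\log g_\alpha \geq -L$, and with matching upper bounds obtained from $g(e^y)\leq 2^L m^n \max_i e^{n y_i}$ at any candidate feasible point, I can derive that a $(1+\eps)$-approximate minimum is attained with $\|y^\star\|_\infty \leq \poly(m,L,\log(1/\eps))$.

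With these three ingredients the standard algorithm for minimizing a convex function over a convex set described by a separation oracle (e.g.\ via sliding-objective ellipsoid or binary search on the level sets) returns a point $y$ with $\log g(e^y) \leq \log\capa_\cB(g) + \eps'$ in time polynomial in $m$, $L$, and $\log(1/\eps')$; choosing $\eps' = \log(1+\eps)$ yields the claimed multiplicative precision. The main obstacle I expect is the third step: the feasible cone is unbounded and the objective is not strongly convex, so turning the homogeneity and coefficient-size bound $L$ into an explicit $\poly(m,L)$ bound on the search radius requires a careful argument, but should be routine given the shifting trick above.
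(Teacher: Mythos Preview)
Your proposal is correct and follows essentially the same approach as the paper: both pass to the dual characterization (Lemma~\ref{lemma:dual_capa}), observe convexity of $y\mapsto\log g(e^y)$ and of the feasible cone, build a separation oracle for the latter from the oracle for $P(\cB)$, compute gradients via Fact~\ref{fact:gradient}, and run ellipsoid with binary search over level sets. If anything, your treatment is more careful than the paper's sketch: the paper does not explicitly address how to bound the search region, whereas you outline the homogeneity-based shifting argument and the coefficient-size bounds that localize an approximate optimum to a $\poly(m,L,\log(1/\eps))$ box.
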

\begin{proof}
We first rewrite $\capa_\cB(g)$ as a convex program using its dual characterization~\ref{lemma:dual_capa}.

\begin{equation}
\begin{aligned}
	\inf_{y\in \R^m}  ~~ &  \log g(e^y),&\\
	\st ~~ &  \sum_{i\in S} y_i\geq 0,& S\in \cB.\\
\end{aligned}
\label{eq:conv_capa}	
\end{equation}
Where by $g(e^y)$ we mean $g(e^{y_1}, \ldots, e^{y_m})$. 
Note that $f(y)=g(e^y)$ is a convex function of $y$, indeed this follows from H\"older's inequality, since all the coefficients of $g$ are nonnegative. Moreover the set $Y=\{y\in \R^m: \sum_{i\in S} y_i \geq 0, \mbox{ for all } S\in \cB\}$ is convex and a separation oracle for it can be constructed given the separation oracle for $P(\cB)$. 

By the standard binary search technique we reduce the problem to answering a sequence of queries of the form:
$$\mbox{Is the set }~~U_c=\{y\in Y: f(y)\leq c\} ~~\mbox{ nonempty?}$$
To solve it using ellipsoid we just need to provide a separation oracle. If the condition $y^0 \in Y$ is not satisfied then the separation oracle for $Y$ provides us with a separating hyperplane. If we are
given a point $y^0$ such that $f(y^0)=c_0>c$ we can produce a separating hyperplane using the gradient information, because from (first order) convexity we have:
$$f(y)\geq f(y^0) + \inangle{\nabla f(y^0), y-y^0} \geq c_0 + \inangle{\nabla f(y^0), y-y^0},$$
hence in particular the equation:
$$\inangle{\nabla f(y^0), y-y^0} = -\frac{c_0-c}{2},$$
gives a separating hyperplane.

\end{proof}

\subsection{The Generating Polynomial}\label{sec:selection1}
As demonstrated in Section~\ref{sec:capa} the task of proving that $\capa_\cB(g)$ well approximates $g_\cB$ boils down to coming up with a real stable polynomial $h$ which is a $\cB^\star-$selection and its lower capacity with respect to $\cB^\star$ is as large as possible. We will provide one generic way of coming up with such polynomials and proving lower bounds on their lower capacity. It captures the case when $\cB^\star$ is a strongly Rayleigh matroid. In the next subsection we extend it to capture more general settings. 

Recall that for a given family $\cB \subseteq {[m] \choose n}$ (which should be thought as the dual of what we normally call $\cB$) we are interested in an $n-$homogeneous real stable polynomial $h$, which is a $\cB-$selection and has a large lower $\cB$-capacity. There exists one  natural choice for $h$, the generating polynomial of $\cB$ 
$$h(z) = \sum_{S\in \cB}z^S,$$
it satisfies the conditions for a $\cB-$selection in an obvious way, thus the remaining questions are:
\begin{itemize}
\item Is $h(z)$ real stable?
\item What is the lower $\cB-$capacity of $h$?
\end{itemize}
The first question leads us directly to the notion of Rayleigh and strongly Rayleigh matroids introduced in \cite{CW06}. As shown in~\cite{Branden07} the class of matroids $\cM$ such that $h(z)$ is real stable (for $\cB$ being the set of bases of $\cM$) is precisely equal to the class of matroids enjoying the strongly Rayleigh property. In the next subsection we discuss possible ways to weaken this requirement of $\cB$ being strongly Rayleigh by manipulating the coefficients of $h(z)$.

Now we address the second question from the above list.
\begin{lemma}\label{lemma:dual_capa}
For every non-empty family $\cB \subseteq {[m] \choose n}$, the polynomial $f(z)=\sum_{S\in \cB} \beta_S z^S$, with $\beta_S\geq 1$ for every $S\in \cB$, satisfies
$$\dcapa_\cB(f) \geq 1.$$
\end{lemma}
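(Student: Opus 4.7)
The plan is to reduce the claim to the weighted AM--GM inequality applied termwise over the convex combination representation of $\theta \in P(\cB)$. By the definition of $\dcapa_\cB(f)$, it suffices to show that for every fixed $\theta \in P(\cB)$ and every $z > 0$ we have the pointwise bound $f(z) \ge \prod_{i=1}^m z_i^{\theta_i}$, which is exactly what we want $\dcapa_\cB(f) \ge 1$ to encode.

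First I would fix $\theta \in P(\cB)$ and use the fact that $P(\cB) = \conv\{1_S : S \in \cB\}$ to write $\theta = \sum_{S \in \cB} \lambda_S \, 1_S$ with $\lambda_S \ge 0$ and $\sum_{S \in \cB} \lambda_S = 1$. Substituting this into the exponents and regrouping gives the identity
\[
\prod_{i=1}^m z_i^{\theta_i} \;=\; \prod_{i=1}^m z_i^{\sum_{S\ni i} \lambda_S} \;=\; \prod_{S\in \cB} \Bigl(\prod_{i\in S} z_i\Bigr)^{\lambda_S} \;=\; \prod_{S\in \cB} (z^S)^{\lambda_S}.
\]
At this point the quantity on the right is a weighted geometric mean of the positive numbers $\{z^S\}_{S\in \cB}$ with weights $\{\lambda_S\}$ summing to $1$.

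Next I would invoke weighted AM--GM to obtain
\[
\prod_{S \in \cB} (z^S)^{\lambda_S} \;\le\; \sum_{S \in \cB} \lambda_S \, z^S.
\]
Since $0 \le \lambda_S \le 1$ and by hypothesis $\beta_S \ge 1$, we have $\lambda_S \le \beta_S$ for every $S \in \cB$, and because $z > 0$ this yields $\sum_{S \in \cB} \lambda_S z^S \le \sum_{S \in \cB} \beta_S z^S = f(z)$. Chaining these inequalities gives $f(z)/\prod_i z_i^{\theta_i} \ge 1$ for every $\theta \in P(\cB)$ and $z > 0$; taking the double infimum yields $\dcapa_\cB(f) \ge 1$.

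There is no real obstacle here: the argument is essentially a one-line application of weighted AM--GM once $\theta$ is expanded in the vertex representation of $P(\cB)$. The only point that deserves care is verifying that the convex-combination coefficients $\lambda_S$ can always be chosen with $\lambda_S \le 1$ (which is immediate from $\sum_S \lambda_S = 1$ and $\lambda_S \ge 0$), so that the assumption $\beta_S \ge 1$ can be used to absorb the AM--GM bound into $f(z)$.
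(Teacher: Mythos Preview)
Your proof is correct and is essentially the same as the paper's: both fix $\theta$, expand it as a convex combination $\sum_S \lambda_S 1_S$, bound $\prod_i z_i^{\theta_i}$ by $\sum_S \lambda_S z^S$, and then use $\lambda_S \le 1 \le \beta_S$ to dominate this by $f(z)$. The only cosmetic difference is that the paper phrases the key inequality as Jensen's inequality for $\log$ rather than weighted AM--GM, which are of course equivalent.
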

\begin{proof}
We need to prove that for every choice of $\theta \in P(\cB)$ we have
$$\inf_{z>0} \frac{f(z)}{\prod_{i=1}^m z_i ^{\theta_i }}\geq 1.$$
Let us then fix $\theta \in P(\cB)$ and any $z>0$. Since $\theta \in P(\cB)$ we can write it as
$\theta=\sum_{S\in \cB} \alpha_S 1_S,$
for some nonnegative $\alpha \in [0,1]^\cB$ with $\sum_S \alpha_S = 1$. From Jensen's inequality we obtain:
\begin{align*}
\log \inparen{\sum_{S\in \cB} \alpha_S z^S} &\geq \sum_{S\in \cB} \alpha_S \log \inparen{z^S}\\
&=\sum_{S}\alpha_S \sum_{i\in S} \log z_i\\
&=\sum_{i=1}^m \sum_{S\ni i} \alpha_S \log z_i\\
&=\sum_{i=1}^m \theta_i \log z_i
\end{align*}
By taking exponentials and using the trivial estimate $f(z) \geq \sum_{S\in \cB}\alpha_S z^S$ we obtain the lemma.
\end{proof}
It is not hard to see that the above lemma is actually tight and indeed $\dcapa_\cB(f)$ is equal to $\min \{\beta_S: S\in \cB \}.$

\subsection{Nonuniform Generating Polynomials and Linear Matroids}\label{sec:selection2}
Consider the case when $\cB \subseteq {[m] \choose n}$ is the set of bases of a linear matroid $\cM$ (over $\R$). This means that there is a matrix $V\in \R^{m\times d}$ having rows $v_1, v_2, \ldots, v_m$ such that:
$$S\in \cM~~~~\Leftrightarrow~~\mbox{ the collection $\{v_i\}_{i \in S}$ is linearly independent.}$$
this can be also restated as:
$$S\in \cM~~~~\Leftrightarrow~~\det(V_S^\top V_S)>0,$$
where $V_S$ is the matrix $V$ restricted to rows with indices in $S$. Such a matrix $V$ we call a linear representation of $\cM$. Note that the polynomial:
$$h(z)=\sum_{S\in {[m] \choose n}} \det(V_S^\top V_S) z^S$$
has as its support exactly $\cB$. We have the following well known, but important fact. 
\begin{fact}\label{fact:linear_stable}
For any matrix $V\in \R^{m\times d}$ the polynomial $h(z)=\sum_{S\in {[m] \choose n}} \det(V_S^\top V_S) z^S$ is $n-$homogeneous  and real stable.
\end{fact}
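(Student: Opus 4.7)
The plan is to recognize $h(z)$ as the determinant of a matrix that depends linearly on $z$, and then invoke the classical fact that $\det\inparen{\sum_i z_i A_i}$ is real stable whenever the $A_i$ are positive semidefinite. In the setting relevant to linear matroids, the representation matrix $V$ has $n$ columns (equal to the rank of $\cM$), so for each $S\in\binom{[m]}{n}$ the block $V_S$ is square and $\det(V_S^\top V_S)=\det(V_S)^2$. Applying the Cauchy--Binet identity to the product $V^\top Z V$, where $Z=\mathrm{diag}(z_1,\ldots,z_m)$, yields
$$\det\inparen{V^\top Z V} \;=\; \sum_{S\in\binom{[m]}{n}} \det(V_S)^2\, z^S \;=\; h(z),$$
or equivalently, writing $v_i\in\R^n$ for the $i$-th row of $V$ viewed as a column vector,
$$h(z)\;=\;\det\inparen{\sum_{i=1}^m z_i\, v_i v_i^\top}.$$
Since the matrix inside is $n\times n$ and depends linearly on $z$, the polynomial $h$ is manifestly $n$-homogeneous, which disposes of the easy half of the statement.

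For real stability I would appeal to the standard theorem (see, e.g., \cite{BB09}) that for any positive semidefinite matrices $A_1,\ldots,A_m\in\R^{n\times n}$ the polynomial $\det\inparen{\sum_i z_i A_i}$ is real stable; specializing to the rank-one PSD matrices $A_i=v_iv_i^\top$ proves the claim. If an explicit argument is desired, one fixes $u\in\R^m$ and $v\in\R_{>0}^m$ and shows that the univariate restriction $f(t)=\det(M_u+tM_v)$ is real-rooted, where $M_u=\sum_i u_i A_i$ is symmetric and $M_v=\sum_i v_i A_i$ is positive semidefinite. Assuming first that $M_v\succ 0$, one writes $f(t)=\det(M_v)\cdot\det\inparen{M_v^{-1/2}M_uM_v^{-1/2}+tI}$, and the inner matrix is symmetric, hence has only real eigenvalues, so $f$ is real-rooted. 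The general PSD case follows by the standard perturbation $M_v\mapsto M_v+\eps I$, $\eps\to 0$, using that a limit of real-rooted polynomials of bounded degree is either real-rooted or identically zero.

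There is no genuine obstacle in this argument: Cauchy--Binet is a one-line identity, and the real stability of $\det\inparen{\sum_i z_i A_i}$ for PSD $A_i$ is one of the canonical examples in the theory of real stable polynomials. The only minor care point is to match conventions—one should note that for a rank-$n$ linear matroid the representation matrix is naturally taken to have exactly $n$ columns, which is precisely what makes each $V_S$ square for $|S|=n$ and puts $\det(V_S^\top V_S)$ into the pleasant form $\det(V_S)^2$ used in the Cauchy--Binet step above.
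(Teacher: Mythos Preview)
Your proof is correct and follows essentially the same route as the paper: identify $h(z)$ with $\det\inparen{V^\top Z V}=\det\inparen{\sum_i z_i v_iv_i^\top}$ via Cauchy--Binet, then invoke the real stability of determinantal polynomials with PSD summands. The paper likewise treats only the case $d=n$ and defers the general $d$ to \cite{NS16}; your additional sketch of the eigenvalue/perturbation argument for stability is a nice self-contained extra that the paper omits in favor of a citation.
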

\begin{proof}
We present the proof in the special case when $d=n$. We refer the reader to~\cite{NS16} for the general case.
Consider the polynomial $r(z)=\det\inparen{\sum_{i=1}^m z_i v_i v_i^\top}=\det\inparen{V^\top ZV}$, where $Z=\diag{z_1, \ldots, z_m}$. As a determinantal polynomial, $r(z)$ is real stable (see e.g.~\cite{Vishnoi-Survey}). It remains to observe that from the Cauchy Binet formula:
$$r(z)=\sum_{S\in {[m] \choose n}} z^S \det(V_S^\top V_S).$$
\end{proof}
This fact allows us to use $h(z)$, after a suitable rescaling, as an approximate $\cB-$selection. This rescaling can be controlled by the \textit{unbalance} of a linear matroid which essentially measures the maximum distortion $\frac{\det(V_S^\top V_S)}{\det(V_T^\top V_T)}$ over $S,T\in \cB$ (see definition in Section~\ref{sec:preliminaries}). We obtain

\begin{lemma}\label{lemma:linear_selec}
Let $\cB \subseteq {[m] \choose n} $ be a set of bases of an $\R-$linear matroid. There exists a real stable $c-$approximate $\cB-$selection $h$ with $c\leq \un(\cB)$ (the unbalance of $\cB$) and $\dcapa_{\cB}(h)\geq 1$.
\end{lemma}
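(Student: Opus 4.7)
The plan is to construct $h$ explicitly from an $\R$-representation of the matroid and then invoke the two tools already developed, namely Fact~\ref{fact:linear_stable} (real stability of the determinantal polynomial) and Lemma~\ref{lemma:dual_capa} (the lower capacity estimate for polynomials supported on $\cB$ with coefficients bounded below by $1$).

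Concretely, fix any $\R$-representation $V \in \R^{m \times d}$ of the matroid $\cM$ whose set of bases is $\cB$, and set
\[
H(z) \defeq \sum_{S \in \binom{[m]}{n}} \det(V_S^\top V_S)\, z^S.
\]
The support of $H$ equals $\cB$ since $\det(V_S^\top V_S)>0$ precisely when $\{v_i\}_{i\in S}$ is linearly independent, i.e.\ $S \in \cB$. Fact~\ref{fact:linear_stable} shows that $H$ is $n$-homogeneous and real stable. To turn $H$ into a $c$-approximate $\cB$-selection, define
\[
h(z) \defeq \frac{1}{\gamma}\, H(z), \qquad \gamma \defeq \min_{T \in \cB} \det(V_T^\top V_T).
\]
Scaling by a positive constant preserves both real stability and $n$-homogeneity. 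For $S \in \cB$ the coefficient of $z^S$ in $h$ equals $\det(V_S^\top V_S)/\gamma \in [1, \un(V)]$, and for $S \in \binom{[m]}{n}\setminus \cB$ the coefficient vanishes. Hence $h$ is a $c$-approximate $\cB$-selection with $c \leq \un(V)$.

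For the lower capacity bound, note that every coefficient of $h$ indexed by $S \in \cB$ is at least $1$, so Lemma~\ref{lemma:dual_capa} (applied with $\beta_S = h_S \geq 1$) yields $\dcapa_{\cB}(h) \geq 1$. Finally, taking the infimum over all $\R$-representations $V$ of $\cM$, we obtain a real stable $c$-approximate $\cB$-selection $h$ with $c \leq \un(\cM) = \un(\cB)$ and $\dcapa_\cB(h) \geq 1$, as claimed.

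There is no real obstacle here: every ingredient (real stability via Cauchy--Binet, the choice of normalization, and the Jensen-type lower bound of Lemma~\ref{lemma:dual_capa}) has already been established, so the proof is essentially a bookkeeping assembly. The only minor care one must take is to ensure that the rescaling factor $\gamma$ is strictly positive, which is automatic since $\cB$ is nonempty and every basis has positive Gram determinant.
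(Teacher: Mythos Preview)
Your proof is correct and follows essentially the same route as the paper: construct the determinantal polynomial from an $\R$-representation, normalize so the smallest basis coefficient equals $1$, invoke Fact~\ref{fact:linear_stable} for real stability, and apply Lemma~\ref{lemma:dual_capa} for the lower-capacity bound. The only cosmetic difference is that the paper picks at the outset a representation $V$ achieving $\un(V)=\un(\cB)$ (the definition of $\un(\cB)$ is a minimum, so such a $V$ exists), whereas you phrase this as ``taking the infimum over all representations'' at the end; since you need a single $h$, it is cleaner to fix the minimizing $V$ from the start.
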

\begin{proof}
Take $V$ to be the most balanced representation of $\cB$, i.e. $\un(V)=\un(\cB)$. We can scale the vectors of $V$ in such a way that:
$$1 \leq \det(V_S V_S^\top) \leq \un(V)~~\mbox{for all }S\in \cB.$$
From the Fact~\ref{fact:linear_stable} the polynomial:
$$h(z)=\sum_{S\in {[m] \choose n}} \det(V_S V_S^\top) z^S$$
is real stable, and clearly it is an $\un(V)-$approximate $\cB-$selection.
\end{proof}
Also, more generally we can state the following lemma on nonuniform generating polynomials.
\begin{lemma}\label{lemma:rayleigh_selec}
Let $\cB \subseteq {[m] \choose n} $ be a set of bases of a matroid. Suppose that $\mu$ is a strongly Rayleigh distribution with $\supp(\mu)=\cB$ and $P=\max_{S,T\in \cB} \frac{\mu(S)}{\mu(T)}$. Then there exists a real stable $P-$approximate $\cB-$selection $h$ with $\dcapa_{\cB}(h)\geq 1$.
\end{lemma}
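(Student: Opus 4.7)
\begin{proofsketch}
The plan is to take the generating polynomial of $\mu$ and simply rescale it to meet the lower bound required by a $\cB$-selection. Concretely, define
\[
g(z) = \sum_{S \in \cB} \mu(S)\, z^S,
\]
which is real stable by the very definition of the strongly Rayleigh property, and $n$-homogeneous because every $S \in \cB$ has $|S|=n$. Let $\mu_{\min} = \min_{T \in \cB} \mu(T) > 0$ (positive since $\supp(\mu) = \cB$), and set
\[
h(z) \defeq \frac{1}{\mu_{\min}}\, g(z) = \sum_{S \in \cB} \frac{\mu(S)}{\mu_{\min}}\, z^S.
\]
Real stability is preserved under multiplication by positive scalars, so $h$ is real stable. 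Moreover $h$ is supported exactly on $\cB$, and for every $S \in \cB$ we have $h_S = \mu(S)/\mu_{\min} \in [1, P]$ by the definition of $P$. Hence $h$ satisfies the two conditions to be a $P$-approximate $\cB$-selection.

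It remains to verify the lower capacity bound. Since all coefficients of $h$ (over $\cB$) are at least $1$, this is immediate from the earlier lemma on lower capacities of generating polynomials (the statement that $\dcapa_\cB(f) \geq 1$ whenever $f = \sum_{S \in \cB} \beta_S z^S$ with $\beta_S \geq 1$): applied with $\beta_S = \mu(S)/\mu_{\min} \geq 1$, it yields $\dcapa_\cB(h) \geq 1$.

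There is no real obstacle here; the lemma is essentially a bookkeeping step that packages the generating polynomial of a strongly Rayleigh distribution into the approximate-selection framework. The only subtle point is recognizing that the ``approximate'' slack $c = P$ in the definition of an approximate $\cB$-selection is precisely what lets us absorb the non-uniformity of $\mu$'s coefficients while still preserving both real stability (which needs the original $\mu(S)$-weighted polynomial) and the lower bound $h_S \geq 1$ (which requires the rescaling by $1/\mu_{\min}$). This is the natural common generalization of the previous two subsections: Lemma on the uniform generating polynomial is the case $P = 1$, and Lemma~\ref{lemma:linear_selec} is the instantiation where $\mu(S) \propto \det(V_S^\top V_S)$ gives a strongly Rayleigh distribution with $P = \un(\cB)$.
\end{proofsketch}
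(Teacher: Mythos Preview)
Your proof is correct and essentially identical to the paper's: both define $h(z) = \frac{1}{\min_{T\in\cB}\mu(T)}\sum_{S\in\cB}\mu(S)z^S$, observe that $h_S\in[1,P]$ makes it a $P$-approximate $\cB$-selection, and note real stability follows from the strongly Rayleigh assumption. You are slightly more explicit than the paper in invoking the earlier lemma (that $\dcapa_\cB(f)\geq 1$ whenever all coefficients $\beta_S\geq 1$) to verify the lower-capacity bound, which the paper leaves implicit.
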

\begin{proof}
Take $h(z)$ to be $\frac{1}{\min_{S\in \cB} \mu(S)} \sum_{T\in \cB} z^T \mu(T)$. We have that $1\leq h_S \leq P$ for every $S\in \cB$. Since $h\in \R_1^+ [z_1, \ldots, z_m]$ and $\supp(h)=\cB$, $h$ is a $P-$approximate $\cB-$selection. Also, $h(z)$ is real stable, because $ \sum_{T\in \cB} z^T \mu(T)$ is.
\end{proof}

\subsection{Uniform and Partition Matroids}\label{ssec:partition}
The generic choice of an $\cB-$selection $h(z)$ to be  $\sum_{S\in \cB} z^S$ is indeed natural and intuitively ``right'', however it seems to be suboptimal. Two important cases where we can  provably surpass this suboptimality are uniform matroids and partitions matroids. The result for uniform matroids follows from the work~\cite{AOSS16}, below we extend it to the case of partition matroids.

\begin{lemma}\label{lemma:partition_selec}
Let $\cB \subseteq {[m] \choose n} $ be a set of bases of a partition matroid, i.e. $\cB$ is of the form:
$$\cB=\{S: |S\cap P_j|=b_j \mbox{ for }j=1,2,\ldots, p\},$$
where $P_1, P_2, \ldots, P_p$ is a partition of $[m]$ into $p$ disjoint sets and $\sum_{j=1}^p b_j=n$. Then there exists a real stable $\cB-$selection $h(z)$ with
$$\dcapa_{\cB}(h)\geq \prod_{j=1}^p \frac{b_j^{b_j}}{b_j!}.$$
\end{lemma}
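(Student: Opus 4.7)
The plan is to produce a particular real stable $\cB$-selection as a product of univariate-type polynomials, one per block of the partition, and then compute its lower capacity in closed form by exploiting the separability across the parts.

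Concretely, I propose to take
\[ h(z) \defeq \prod_{j=1}^p \frac{1}{b_j!} \Bigl(\sum_{i \in P_j} z_i\Bigr)^{b_j}. \]
First I would verify that $h$ is a $\cB$-selection. It is homogeneous of degree $n = \sum_j b_j$, it is real stable (each factor is real stable as a power of a linear form, and the factors involve disjoint sets of variables, so real stability is preserved under taking the product). For a squarefree monomial $z^S$ with $|S|=n$, the coefficient of $z^S$ in the $j$-th factor is nonzero only if $S\cap P_j$ picks up exactly $b_j$ distinct elements of $P_j$, in which case the multinomial coefficient is $b_j!$ and the $1/b_j!$ normalization produces a contribution of $1$. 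Hence $h_S = 1$ for every $S \in \cB$ and $h_S = 0$ for any squarefree $S$ of size $n$ outside $\cB$.

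Next I would compute $\dcapa_{\cB}(h)$. Fix any $\theta \in P(\cB)$. Since every $S \in \cB$ satisfies $|S\cap P_j| = b_j$, convexity gives $\sum_{i \in P_j}\theta_i = b_j$ for every $j$. Because $h$ is a product over disjoint variable sets, and $\prod_i z_i^{\theta_i} = \prod_j \prod_{i \in P_j} z_i^{\theta_i}$, the infimum over $z > 0$ factorizes as
\[ \inf_{z>0} \frac{h(z)}{\prod_i z_i^{\theta_i}} = \prod_{j=1}^p \inf_{z_{P_j}>0} \frac{\tfrac{1}{b_j!}\bigl(\sum_{i\in P_j} z_i\bigr)^{b_j}}{\prod_{i \in P_j} z_i^{\theta_i}}. \]
For each factor, a standard Lagrangian computation (derivative in $\log z_i$, using $\sum_{i\in P_j} \theta_i = b_j$) shows that the optimum is attained at $z_i = \theta_i/b_j$, yielding the value $\frac{b_j^{b_j}}{b_j! \prod_{i \in P_j} \theta_i^{\theta_i}}$.

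Finally I would use the elementary inequality $x^x \leq 1$ for $x \in [0,1]$, together with $\theta_i \in [0,1]$ (which holds because $P(\cB) \subseteq [0,1]^m$), to conclude $\prod_{i \in P_j} \theta_i^{\theta_i} \leq 1$, and therefore
\[ \inf_{z>0} \frac{h(z)}{\prod_i z_i^{\theta_i}} \;\geq\; \prod_{j=1}^p \frac{b_j^{b_j}}{b_j!}. \]
Taking the infimum over $\theta \in P(\cB)$ gives the claimed bound on $\dcapa_\cB(h)$. I do not expect any serious obstacle here: the decoupling across parts makes the capacity computation essentially the single-block Gurvits-style calculation, and the only substantive point is the observation that $\sum_{i \in P_j}\theta_i = b_j$ on $P(\cB)$, which lets the infimum factor block-by-block.
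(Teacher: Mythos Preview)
Your proposal is correct and follows essentially the same route as the paper: the same choice of $h$, the same block-by-block factorization using $\sum_{i\in P_j}\theta_i=b_j$, and the same per-block bound. The only cosmetic difference is that the paper derives the single-block inequality via Jensen and an entropy estimate, whereas you compute the infimum explicitly and use $\theta_i^{\theta_i}\le 1$; these are the same inequality in two notations.
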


\begin{proof}
Consider the following choice of $h$
$$h(z) = \prod_{j=1}^p\frac{\inparen{\sum_{i\in P_j}z_i }^{b_j}}{b_j!}.$$
It is not hard to see that $h(z)$ is indeed a $\cB$-selection. Indeed, all the coefficients of monomials $z^S$ for $S\in \cB$ are $1$ and these are the only squarefree monomials which appear with a nonzero coefficient.

  It suffices to show a lower bound on $\dcapa_{\cB}$. To this end fix any $\theta \in P(\cB)$, any $z>0$ and consider
\begin{equation}\label{eq:product}
\frac{h(z)}{\prod_{i=1}^m z_i^{\theta_i}} = \prod_{j=1}^p\frac{\inparen{\sum_{i\in P_j}z_i }^{b_j}}{b_j!\prod_{i\in P_j}z_i^{\theta_i}}. 
\end{equation}
At this point one can observe that all the terms in the product can be analyzed separately, this is because $P(\cB)$ is actually a cartesian product of $p$ sets. More precisely
$$P(\cB) = \{\theta \in [0,1]^m: \sum_{i\in P_j} \theta_i = b_j \mbox{ for all }j=1,2,\ldots, p\}.$$

\noindent Let us consider one particular term in the product in~\eqref{eq:product}. WLOG take $j=1$, and assume that $P_1=\{1,2,\ldots, d\}$ and rename $b_1$ to simply $b$. Our goal now reduces to lower-bounding:
$$\frac{(\sum_{i=1}^d z_i)^b}{b! \prod_{i=1}^d z_i^{\theta_i}},$$
where $\theta\in [0,1]^d$ satisfies $\sum_{i=1}^d \theta_i=b$. From Jensen's inequality for $\log$ we obtain:
$$b \cdot \log \inparen{\sum_{i=1}^d z_i} = b \cdot \log \inparen{\sum_{i=1}^d \frac{\theta_i}{b} \cdot\frac{b z_i}{\theta_i}}\geq \sum_{i=1}^d \theta_i \log \frac{bz_i}{\theta_i}.$$
Further,
$$\sum_{i=1}^d \theta_i \log \frac{bz_i}{\theta_i} = \sum_{i=1}^d \theta_i \log \frac{b}{\theta_i} + \sum_{i=1}^d \theta_i \log z_i.$$
Finally note that $\inbraces{\frac{\theta_i}{b}}_{i\in [d]}$ is a probability distribution over $d$ items, with probabilities bounded from above by $\nfrac{1}{b}$, this implies that its negative entropy is at least
$$-\sum_{i=1}^d \frac{\theta_i}{b} \log \frac{\theta_i}{b} \geq \log b.$$
Concluding, we obtain:
$$b \cdot \log \inparen{\sum_{i=1}^d z_i} \geq \sum_{i=1}^d \theta_i \log z_i + b \log b.$$
After taking exponentials and dividing by $b!$
$$\frac{(\sum_{i=1}^d z_i)^b}{b! \prod_{i=1}^d z_i^{\theta_i}}\geq \frac{b^b}{b!}.$$
\end{proof}

\subsection{Proofs}\label{ssec:proofs}
We conclude our discussion in this section and provide complete proofs of Theorems~\ref{thm:finite} and~\ref{thm:bounds}.
\begin{proofof}{of Theorem~\ref{thm:finite} }
Since there is a strongly Rayleigh distribution supported on $\cB^\star$, applying Lemma~\ref{lemma:rayleigh_selec} we can conclude existence of a real stable $P-$approximate $\cB^\star-$selection for some (possibly large) $P>0$, such that $\dcapa_\cB(h)\geq 1$. Now, the approximate version of Lemma~\ref{lemma:approx} (see Remark~\ref{remark:approx_selec}), concludes the proof.
\end{proofof}

\begin{proofof}{of Theorem~\ref{thm:bounds} }
For points 1. and 2. we reason as in the proof of Theorem~\ref{thm:finite}, we construct suitable $\cB^\star$ selections using Lemmas~\ref{lemma:rayleigh_selec}  and \ref{lemma:linear_selec} respectively and then  apply the approximate variant of Lemma~\ref{lemma:approx}.  Note that the above mentioned $\cB^\star-$selections are multilinear, hence we can apply the sharper $2^{-m}$ bound in Lemma~\ref{lemma:approx} in the case when $g$ is multilinear as well. 

For point 3. the bound on $\mlb$ follows implicitly from a reasoning in~\cite{NS16}, we derive it in Section~\ref{app:partition}. To obtain a bound on $\mb$, note that $\cB^\star$ is a partition matroid $\{U(P_j, |P_j|-b_j)\}_{j\in [p]}$. We set $h(z)$ to be the $\cB^\star-$selection constructed in Lemma~\ref{lemma:partition_selec} and apply Lemma~\ref{lemma:approx}, this yields a bound  $\mb \leq \frac{m^m}{m!} \cdot \prod_{j=1}^p \frac{(|P_j|-b_j)!}{(|P_j|-b_j)^{|P_j|-b_j}}$. 

It remains to argue that 
$\frac{m^m}{m!} \cdot \prod_{j=1}^p \frac{(|P_j|-b_j)!}{(|P_j|-b_j)^{|P_j|-b_j}}\leq e^n \inparen{\frac{2\pi m}{p}}^{p/2}$. After applying the bound $\frac{k!}{k^k} \leq \sqrt{k} e^{-k+1}$ and using $\sum_{j=1}^p (|P_j|-b_j) = m-n$ we obtain:
$$ \frac{m^m}{m!} \cdot \prod_{j=1}^p \frac{(|P_j|-b_j)!}{(|P_j|-b_j)^{|P_j|-b_j}} \leq e^m e^{-m+n+p} \prod_{j=1}^p \sqrt{|P_j|-b_j}.$$
Finally, by the AM-GM inequality 
$$\prod_{j=1}^p \sqrt{|P_j|-b_j} \leq \inparen{\frac{m-n}{p}}^{p/2}\leq \inparen{\frac{m}{p}}^{p/2}.$$
\end{proofof}
\section{Optimization}\label{sec:optimization}

In this section we discuss the problem of finding 
\begin{equation}\label{eq:max}
\max_{S\in \cB} g_S
\end{equation} for a given $n-$homogeneous polynomial $g\in \R^+[x_1, \ldots, x_m] $ and a set family $\cB \subseteq {[m] \choose n}$.

One naive way to approach problem~\eqref{eq:max} would be to apply Theorem~\ref{thm:bounds} directly.  Since $\capa_\cB(g)$ approximates $g_\cB$ up to a factor of $\mb$ we could just output $\capa_{\cB}(g)$ as an approximation to~\eqref{eq:max} and obtain a guarantee of:
$${\mb} \cdot |\cB| \leq {\mb} \cdot m^n = {\mb} \cdot e^{n \log m}.$$
We believe that the bounds in Theorem~\ref{thm:bounds} can be strengthened, so that ${\mb}$ (or $\mlb$) depend on $n$ only (as for the case of uniform matroids), which makes the $e^{n \log m}$ inefficient. In the next subsection we propose a method which achieves an approximation guarantee of at most $\mb \cdot e^n$ and can be better, depending on a particular $\cB$.

\subsection{Convex Relaxation}\label{ssec:convex_relax}
  We consider the following relaxation to problem~\eqref{eq:max}.
\begin{equation}
\begin{aligned}
	\sup_{x\in P(\cB)}~~ \inf_{\substack{y>0\\\forall S\in \cB~y^S\geq 1 }} ~~ &  g(x_1y_1, \ldots, x_my_m).&\\
\end{aligned}
\label{eq:max_relax}	
\end{equation}
A similar relaxation was used in~\cite{NS16} in the context of subdeterminant maximization and in~\cite{AOSS16} for the Nash Social Welfare problem. In fact the relaxation in~\cite{NS16} is an upper bound for ours. Both of them behave similarly in the case when $\cB$ is a partition family. However, for other matroids, such as spanning tree matroids, the relaxation of~\cite{NS16} has an unbounded integrality gap, whereas for the case of relaxation~\eqref{eq:max_relax} we can prove
\begin{lemma}\label{lemma:approx_guarantee}
Let $\cB \subseteq {[m] \choose n}$ be a family of bases of a matroid. For every real stable, $n-$homogeneous polynomial $g\in \R^+[z_1, \ldots, z_m]$. The optimal value $OPT$ of the relaxation~\eqref{eq:max_relax} satisfies:
$$\frac{OPT}{{\mb} \cdot A(\cB)} \leq \max_{S\in \cB} g_S \leq OPT,$$
where $A(\cB):= \max \inbraces{\sum_{S\in \cB} x^S:x\in P(\cB)} \leq e^n$. Moreover, in the case when $g\in \R_1^+[z_1, \ldots, z_m]$, $\mb$ can be replaced by $\mlb$ in the bound above.
\end{lemma}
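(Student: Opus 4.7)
The plan is to rewrite the relaxation in the form $\sup_{x \in P(\cB)}\capa_\cB(r_x)$, where $r_x(y) := g(x_1 y_1,\ldots, x_m y_m)$ is regarded as a polynomial in $y$. The inner infimum in~\eqref{eq:max_relax} is exactly the dual formulation of $\capa_\cB(r_x)$ supplied by Lemma~\ref{lemma:dual_capa}. A direct monomial expansion gives $r_x(y) = \sum_\alpha g_\alpha x^\alpha y^\alpha$, from which I read off four properties of $r_x$ that I will use repeatedly: nonnegative coefficients (since $x,g \geq 0$); $n$-homogeneity in $y$; real stability in $y$ (positive scaling of variables preserves stability, and the boundary case $x_i=0$ corresponds to specializing $z_i=0$ in $g$, which also preserves stability); and multilinearity whenever $g$ is multilinear. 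Thus $r_x$ is admissible for the defining suprema of $\mb$ and, in the multilinear case, $\mlb$.

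For the easy direction $\max_{S \in \cB} g_S \leq OPT$, I would pick an optimal basis $S^\star$ and set $x = 1_{S^\star} \in P(\cB)$. For $T \in \cB$, the coefficient $(r_x)_T = g_T \cdot (1_{S^\star})^T$ vanishes unless $T \subseteq S^\star$, which by $|T|=|S^\star|=n$ forces $T = S^\star$. Hence $(r_x)_\cB = g_{S^\star}$, and at any feasible $y$ the trivial estimate $r_x(y) \geq \sum_{T\in\cB}(r_x)_T\, y^T \geq (r_x)_\cB$ shows $\capa_\cB(r_x) \geq g_{S^\star}$, whence $OPT \geq g_{S^\star}$.

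For the main direction $OPT \leq \mb \cdot A(\cB) \cdot \max_{S\in\cB} g_S$, I would fix an arbitrary $x \in P(\cB)$ and chain three inequalities. First, applying the definition of $\mb$ to $r_x$ gives $\capa_\cB(r_x) \leq \mb \cdot (r_x)_\cB$. Second, the $\cB$-coefficient expands as $(r_x)_\cB = \sum_{T\in\cB} g_T\, x^T \leq \bigl(\max_{S\in\cB} g_S\bigr)\sum_{T\in\cB} x^T$. Third, by the very definition $A(\cB) = \max_{x\in P(\cB)}\sum_{S\in\cB}x^S$, the remaining sum is bounded by $A(\cB)$. Multiplying and taking the supremum over $x$ concludes the bound; the multilinear case is identical except that one invokes $\mlb$ in the first step.

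The main subtlety---which I expect to be the only point requiring care---is the degenerate case $g_T=0$ for every $T \in \cB$, in which the definition of $\mb$ (which conditions on $p_\cB>0$) does not directly apply to $r_x$. This can be handled by a standard perturbation: replace $g$ by $g_\varepsilon := g + \varepsilon\, h$ for a real stable $\cB$-selection $h$, whose existence under the matroid hypothesis is precisely what Sections~\ref{sec:selection1}--\ref{sec:selection2} deliver, apply the argument to $g_\varepsilon$, and let $\varepsilon \to 0$ using continuity of $\capa_\cB$ in its coefficients. The asserted estimate $A(\cB) \leq e^n$ for matroid base polytopes is taken as standard.
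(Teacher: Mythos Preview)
Your main argument is essentially identical to the paper's: both recast~\eqref{eq:max_relax} as $\sup_{x\in P(\cB)}\capa_\cB(r_x)$ via the dual characterization of capacity, establish $OPT\ge\max_{S\in\cB} g_S$ by plugging in $x=1_{S^\star}$, and obtain the reverse bound by applying the definition of $\mb$ to $r_{\bar x}$ together with $(r_{\bar x})_\cB=\sum_{T\in\cB}g_T\bar x^T\le A(\cB)\max_{S\in\cB} g_S$. The paper additionally sketches the estimate $A(\cB)\le e^n$ (bounding $\sum_{S\in\cB}x^S$ over the simplex $\{x\ge 0,\sum_i x_i=n\}$ by its value $\binom{m}{n}(n/m)^n\le e^n$ at the center), which you defer as standard; you are, on the other hand, more careful than the paper about why $r_x$ remains real stable when some $x_i=0$.

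One caution on your treatment of the degenerate case $g_T=0$ for all $T\in\cB$: the proposed perturbation $g_\varepsilon=g+\varepsilon h$ is \emph{not} in general real stable, since sums of real stable polynomials need not be real stable, so this step as written does not go through. (The paper simply ignores this case.) If you want to close the gap, a cleaner route is to note that whenever $\mb<\infty$ the inequality in Lemma~\ref{lemma:approx} already gives $\capa_\cB(p)\le C\cdot p_\cB$ for a finite constant $C$ and \emph{every} real stable $n$-homogeneous $p\in\R^+[z_1,\ldots,z_m]$, with no hypothesis $p_\cB>0$; hence $(r_x)_\cB=0$ forces $\capa_\cB(r_x)=0$ directly, and $OPT=0$ follows.
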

\begin{proof}
Let $p(x,y)$ denote the polynomial $g(x_1y_1, \ldots, x_my_m)$. One can easily see that if $x=1_S$ for some $S\in \cB$ then $p(x,y)\geq y^S \cdot g_S$, this implies that
$$\max_{S\in \cB} g_S \leq OPT.$$
To prove the lower-bound, fix the optimal solution $\bar{x}$ to the relaxation~\eqref{eq:max_relax} and consider the real stable polynomial $g(z)=p(\bar{x}, z)$. It follows that $OPT = \capa_\cB(g)$, hence:
$$OPT \leq {\mb} \cdot g_\cB.$$
We also have 
$$g_\cB = \sum_{S\in \cB}\bar{x}^S g_S \leq \inparen{\sum_{S\in \cB}\bar{x}^S} \cdot \max_{S\in \cB} g_S\leq A(\cB) \cdot OPT.$$
What remains to prove is that $A(\cB) \leq e^n$. This turns out to be a quite simple consequence of the constraints $x\geq 0$ and $\sum_{i=1}^n x_i=n$. Indeed, the largest possible value of the sum $\sum_{S\in \cB}\bar{x}^S$ is attained when $\bar{x}_i=\frac{n}{m}$ for all $i\in [m]$.
\end{proof}

\begin{lemma}\label{lemma:ellipsoid}
The relaxation~\eqref{eq:max_relax} is efficiently computable. Given access to an evaluation oracle for $g$ and a separation oracle for $P(\cB)$ one can obtain a $(1+\eps)-$approximation to the optimal solution of~\eqref{eq:max_relax} in time polynomial in $m$, $\log \frac{1}{\eps}$ and $L$, where $L$ is an upper bound on the description size of coefficients of $g$.\footnote{Note that the coefficients of $g$ are not explicitly given to us. For this algorithm to work, only the knowledge of $L$ is required.}
\end{lemma}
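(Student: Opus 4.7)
The plan is to recast the relaxation as the maximization of a concave function over the convex set $P(\cB)$, and then apply the ellipsoid method. Define $\Phi(x) = \inf_{y>0,\, y^S \geq 1 \,\forall S \in \cB} g(x_1y_1, \ldots, x_my_m)$, so the value of the relaxation equals $\sup_{x \in P(\cB)} \Phi(x) = \sup_{x \in P(\cB)} \capa_\cB(r_x)$, where $r_x(y) := g(xy)$.

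The main structural claim I would establish is that $\Phi^{1/n}$ is concave in $x$ on the positive orthant. For each fixed $y > 0$, the polynomial $q_y(x) := g(x_1y_1, \ldots, x_my_m)$ remains $n$-homogeneous with nonnegative coefficients, and is real stable since coordinate-wise rescaling by positive constants preserves real stability. By G\aa rding's classical theorem on hyperbolic polynomials (applicable since real stable polynomials are hyperbolic with respect to every strictly positive direction), $q_y^{1/n}$ is concave on the positive orthant. Hence $\Phi^{1/n}(x) = \inf_{y} q_y(x)^{1/n}$ is an infimum of concave functions, hence concave; the concavity extends continuously to $\{x \geq 0\}$.

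To invoke the ellipsoid method for concave maximization over $P(\cB)$, I need (a) a separation oracle for $P(\cB)$, which is provided by assumption; (b) an evaluation oracle for $\Phi$; and (c) a supergradient of $\Phi^{1/n}$ at any query point $x$. For (b), note that $\Phi(x) = \capa_\cB(r_x)$, and an evaluation oracle for $r_x$ is derivable from the one for $g$ (with the description size of coefficients of $r_x$ increasing only polynomially in $L$), so Lemma~\ref{lemma:computability} applies directly. For (c), the inner algorithm of Lemma~\ref{lemma:computability} also returns an approximate minimizer $y^*(x)$ of the underlying convex program; since $q_{y^*(x)}(x) = \Phi(x)$ while $q_{y^*(x)}(x') \geq \Phi(x')$ for every $x'$, the envelope principle identifies $\nabla_{x} q_{y^*(x)}^{1/n}(x)$ as a supergradient of $\Phi^{1/n}$ at $x$. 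This gradient reduces to evaluating partial derivatives of $g$, which is handled by Fact~\ref{fact:gradient}.

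With (a), (b), and (c) in place, I would perform binary search on the value $c$ and run the ellipsoid method to decide emptiness of the convex super-level set $\{x \in P(\cB) : \Phi^{1/n}(x) \geq c\}$: the $P(\cB)$ oracle handles the matroid constraint, and the computed supergradient of $\Phi^{1/n}$ at a candidate $x$ either certifies $\Phi^{1/n}(x) \geq c$ or yields a separating hyperplane from the super-level set. The main technical obstacle I anticipate is propagating the inner approximation error (from approximate evaluation of $\capa_\cB$ and approximate $y^*$) through the outer ellipsoid analysis; this is handled by the standard theory of ellipsoid with approximate first-order information, yielding the claimed runtime polynomial in $m$, $\log(1/\eps)$, and $L$.
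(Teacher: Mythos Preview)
Your approach is essentially the same as the paper's: recast the relaxation as a concave maximization over $P(\cB)$, binary-search on the value, and run the ellipsoid method using the inner minimizer to generate a supergradient/separating hyperplane (with Fact~\ref{fact:gradient} handling the derivative evaluations). The only cosmetic difference is that the paper works with $f(x,w)=\log g(x_1 e^{w_1},\ldots,x_m e^{w_m})$, which is concave in $x$ directly from real stability, whereas you take the $n$th root and invoke G\aa rding's theorem explicitly; the underlying concavity fact and the algorithmic skeleton are identical.
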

\begin{proof}
We first rewrite the relaxation~\eqref{eq:max_relax} in a convex form
\begin{equation}
\begin{aligned}
	\sup_{x\in P(\cB)}~~ \inf_{\substack{w\in \R^m \\\sum_{i\in S}w_i \geq 0, \forall S\in \cB}} ~~ &  \log g(x_1 e^{w_1}, \ldots, x_m e^{w_m})
\end{aligned}
\label{eq:NS_convex}	
\end{equation}
Denote $f(x,w)= \log g(x_1 e^{w_1}, \ldots, x_m e^{w_m})$. The function $f(x,w)$ is concave in the first variable $x$ (follows from real stability of $g$) and convex in the second variable $w$ (can be easily seen using H\"older's inequality). The constraints on both $x$ and $w$ are linear, indeed we have that $w \in W$, where
$$W=\{w\in \R^m: \sum_{i\in S}w_i \geq 0, \mbox{ for all } S\in \cB\}.$$
It is not hard to see that given a separation oracle for $P(\cB)$ one can construct one for $W$, hence our problem is in the general form of a concave-convex saddle point problem $\inf_{x\in X} \sup_{y\in Y} f(x,y)$ and we have separation oracles for both $X$ and $Y$. Below we sketch how such problems can be efficiently solved using the ellipsoid method, under the assumption that the gradient of $f$ can be efficiently computed. After that, we demonstrate that this is indeed the case for $f$.

By performing a binary search over the optimal value of the solution one can reduce the problem to solving a sequence of queries:
$$\mbox{Is the set }~~U_c=\{x\geq 0: x\in P(\cB),  \inf_{w\in W}~ f(x,w)\geq c\} ~~\mbox{ nonempty?}$$
Note that $U_c$ is convex, so we can use the ellipsoid algorithm to test emptiness (approximately). It is enough to provide a separation oracle for $U_c$, which in turn reduces to finding a separation oracle for: $S_c=\{x \in \R^m: \inf_{w\in W}~ f(x,w)\geq c\}$. We focus on this task now.

Let $x\in  \R^m$ be given, we would like to answer the question if $x\in S_c$ and if not, provide a separating hyperplane. To this end we solve the optimization problem $\min_{w\in W} f(x,w)$ (see Lemma~\ref{lemma:computability}) if the optimal value is say $c_0<c$ and is attained at some point $w^\star$, we know (from concavity in the first variable) that:
$$f(z,w^\star) \leq f(x,w^\star) + \inangle{\nabla_x f(x,w^\star), z-x} \leq c_0 + \inangle{\nabla_x f(x,w^\star), z-x}$$
hence 
$$\inangle{\nabla_x f(x,w^\star), z-x} = \frac{c-c_0}{2}$$
is a separating hyperplane (for fixed $x$ and varying $z$).

Finally let us observe that we can compute the gradient of $f$ given just an evaluation oracle of $g$. To see this, observe that computing $\frac{\partial}{\partial x_i} f(x,w)$ and $\frac{\partial}{\partial w_i} f(x,w)$ reduces to computing $\frac{\partial}{\partial z_i} g(z)$, which is efficiently computable by Fact~\ref{fact:gradient}. 
\end{proof}

\noindent The results of this section allow us to deduce Theorem~\ref{thm:max}.
\begin{proofof}{of Theorem~\ref{thm:max}}
Given an optimization problem~\eqref{eq:max} we apply the relaxation~\eqref{eq:max_relax} to it. Lemma~\ref{lemma:ellipsoid}  guarantees that it can be solved in polynomial time. Now by applying Lemma~\ref{lemma:approx_guarantee} we obtain the claimed approximation guarantee.
\end{proofof}

\subsection{Application to Maximizing Subdeterminants}\label{ssec:maxdet}
In this section we discuss the problem of maximizing subdeterminants under constraints. Let $L\in \R^{m\times m}$ be a symmetric PSD matrix  and let $\cB \subseteq {[m] \choose n}$ be any family of sets. We consider the problem:
\begin{equation}\label{eq:max_det}
\max_{S\in \cB} \det(L_{S,S}),
\end{equation}
where $L_{S,S}$ denotes the submatrix of $L$ obtained by restricting it to rows and columns from $S$. Typically in this setting it is useful to consider the Cholesky decomposition $L=VV^\top$ for some matrix $V\in \R^{m\times d}$. This allows us to write the generating polynomial:
$$g(z) = \sum_{S\in {[m] \choose n}} z^S \det(L_{S,S}) =  \sum_{S\in {[m] \choose n}} z^S \det(V_S^\top V_S)=\det(V^\top Z V)$$
where $Z=\diag{z_1, \ldots, z_m}$. We can then apply the result of Theorem~\ref{thm:max} to obtain Corollary~\ref{cor:subdet}.

\begin{proofof}{of Corollary~\ref{cor:subdet}}
We consider $g(z)=\sum_{S\in {[m] \choose n}} z^S \det(L_{S,S}) $ as above. As observed in Fact~\ref{fact:linear_stable} $g(z)$ is $n-$homogeneous and real stable. Moreover $g(z)$ is efficiently computable, since it just boils down to computing a determinant of a $d\times d$ matrix. Through the optimization problem~\eqref{eq:max} $g(z)$ encodes exactly~\eqref{eq:max_det}, hence we can apply Theorem~\ref{thm:max}.
\end{proofof}

\bibliographystyle{alpha}
\bibliography{references}

\begin{appendix}
\section{Entropy Interpretation of $\cB-$capacity}\label{sec:entropy}
In this section we show that computing $\cB-$capacity of a polynomial $p$ can be equivalently seen as finding a distribution $q$ minimizing the KL-divergence between $p$ and $q$ subject to marginal constraints. In this context it is convenient to think of $p\in \R^+[z_1, \ldots, z_m]$ as a probability distribution over monomials $z^\alpha$, more precisely, the probability of a monomial $z^\alpha$ is simply $\frac{p_\alpha}{p(1)}$. (Note that $p(1)=\sum_{\alpha} p_\alpha$. ) We will use $\Lambda_n \defeq \{\alpha \in \N^m : |\alpha|=n\},$ to denote all monomials of degree $n$. 

\begin{lemma}\label{lemma:entropy}
Given a real stable, $n-$homogeneous polynomial $p\in \R^+[z_1, \ldots, z_m]$ and $\cB \subseteq {[m] \choose n}$, the capacity  $\capa_\cB(p)$ is equal to the exponential of the optimum value of the following convex optimization problem
\begin{equation}
\begin{aligned}
	\sup_{q,\theta}  ~~ &  -KL(q,p),&\\
		\st ~~ &  \sum_{\alpha \in \Lambda_n} q_\alpha=1,\\
		& \sum_{\alpha \in \Lambda_n} q_\alpha \cdot \alpha = \theta, \\
 &  \theta\in P(\cB),\\
	&  q\geq 0.&
\end{aligned}
\label{eq:entropy}	
\end{equation}
where $q$ is a vector indexed by all possible multi-indices $\alpha \in \N^m$ with $|\alpha|=n$ and $KL(q,p) \defeq \sum_{\alpha \in \N^m} q_\alpha \log \frac{q_\alpha}{p_\alpha}$.
\end{lemma}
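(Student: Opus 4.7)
The plan is to take logarithms in the definition of $\capa_\cB(p)$ and reinterpret the resulting expression via Fenchel duality between log-sum-exp and negative entropy (the Gibbs variational principle). After substituting $z_i = e^{y_i}$, we have
\[
\log \capa_\cB(p) = \sup_{\theta \in P(\cB)} \inf_{y \in \R^m} \left[\Phi(y) - \inangle{\theta, y}\right],
\]
where $\Phi(y) := \log p(e^y) = \log \sum_\alpha p_\alpha e^{\inangle{\alpha, y}}$, and the sum effectively ranges over $\alpha \in \Lambda_n$ because $p$ is $n$-homogeneous.

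The core identity I would invoke is the Gibbs variational formula
\[
\Phi(y) = \sup_{q \in \Delta} \left[\sum_\alpha q_\alpha \inangle{\alpha, y} - KL(q, p)\right],
\]
where $\Delta$ denotes probability distributions $q$ on $\Lambda_n$ that vanish outside $\supp(p)$. Substituting and exchanging $\inf_y$ with $\sup_q$ gives
\[
\log \capa_\cB(p) = \sup_{\theta \in P(\cB)} \sup_{q \in \Delta} \inf_{y \in \R^m} \left[\inangle{\textstyle\sum_\alpha q_\alpha \alpha - \theta,\ y} - KL(q, p)\right].
\]
The inner infimum over $y$ is linear in $y$ minus a constant, so it equals $-KL(q, p)$ exactly when the linear coefficient vanishes (i.e., $\sum_\alpha q_\alpha \alpha = \theta$) and $-\infty$ otherwise. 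Merging the two outer suprema and exponentiating yields precisely the claimed equality with the optimum of~\eqref{eq:entropy}.

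The hard part will be justifying the exchange of $\inf_y$ and $\sup_q$, since Sion's minimax theorem does not directly apply over the unbounded domain $\R^m$. The cleanest workaround is to argue via Lagrangian duality directly on the convex program of Lemma~\ref{lemma:dual_capa}, namely $\log \capa_\cB(p) = \inf\{\Phi(y) : \sum_{i\in S} y_i \geq 0 \ \forall\, S \in \cB\}$. Slater's condition holds (take $y = c\mathbf{1}$ with $c>0$), so strong duality applies. Introducing multipliers $\mu_S \geq 0$ for $S \in \cB$ and taking $\inf_y$ of the Lagrangian produces the Fenchel conjugate $-\Phi^*(\sum_S \mu_S 1_S)$, which by the Gibbs formula equals $\sup\{-KL(q, p) : q \in \Delta,\ \sum_\alpha q_\alpha \alpha = \sum_S \mu_S 1_S\}$. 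The $n$-homogeneity of $p$ forces $\sum_S \mu_S = 1$ at optimality (by the same stationarity computation used in the proof of Lemma~\ref{lemma:dual_capa}), so $\theta := \sum_S \mu_S 1_S$ ranges exactly over $P(\cB)$, and we recover the formulation~\eqref{eq:entropy}.
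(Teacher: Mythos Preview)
Your proposal is correct and rests on the same Fenchel duality between log-sum-exp and relative entropy that the paper uses; the difference is only in direction. The paper fixes $\theta\in P(\cB)$, writes the entropy program $\sup_q\{-KL(q,p): \sum_\alpha q_\alpha\alpha=\theta,\ q\in\Delta\}$ as the primal, introduces Lagrange multipliers for the marginal and normalization constraints, and derives the dual $\inf_{z>0} p(z)/\prod_i z_i^{\theta_i}$ directly (then maximizes over $\theta$). You instead start from the dual characterization of $\capa_\cB$ in Lemma~\ref{lemma:dual_capa}, dualize over the constraints $\sum_{i\in S}y_i\ge 0$, and identify the resulting $-\Phi^*$ with the constrained entropy maximum via Gibbs. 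Both routes invoke Slater and strong duality at the same point and both leave the same boundary case (when no strictly positive $q$ achieves the marginal $\theta$) to a limiting argument; the paper explicitly flags and skips this case as well. Your framing via $\Phi^*$ is arguably cleaner conceptually, while the paper's direct Lagrangian computation is more self-contained since it does not appeal to biconjugacy of the infimal projection $\theta\mapsto\inf\{KL(q,p):\sum_\alpha q_\alpha\alpha=\theta\}$.
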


\noindent Before we start proving Lemma~\ref{lemma:entropy}. Let us try to interpret what it means. As already mentioned, $p$ can seen as a probability distribution over monomials, or in other words over multisubsets of $[m]$ of cardinality $n$, which we represent by $\Lambda_n$.  If $q$ is a distribution over $\Lambda_n$, then $\theta = \sum_{\alpha \in \Lambda_n} q_\alpha \cdot \alpha$ is the marginal vector of $q$ in which case $\theta_i$ (for $i\in [m]$) represents the expected number of copies of $i$ in a sample $\alpha$ drawn according to $q$. 

The optimization problem~\eqref{eq:entropy} asks to find a distribution $q$ over $\Lambda_n$ which is the closest (in relative entropy) to the given distribution $p$ under the constraint that its marginal lies in $P(\cB)$. 

\begin{proof}
The proof relies on convex duality. Fix any probability distribution $p$ on $\Lambda_n$ and a marginal vector $\theta \in P(\cB)$. We derive the dual of the following convex program
\begin{equation}
\begin{aligned}
	\max_{q,\theta}  ~~ &  -KL(q,p),&\\
		\st ~~ &  \sum_{\alpha \in \Lambda_n} q_\alpha=1,\\
		& \sum_{\alpha \in \Lambda_n} q_\alpha \cdot \alpha = \theta, \\
	&  q\geq 0.&
\end{aligned}
\label{eq:entropy_theta}	
\end{equation}
We make a simplifying assumption that there exists a $q>0$ such that $\theta= \sum_{\alpha \in \Lambda_n} q_\alpha \alpha$. This implies that the Slater's condition is satisfied for~\eqref{eq:entropy_theta}, which makes the analysis much simpler. Introduce Lagrangian multipliers $z\in \R$ and $\lambda \in \R^m$  and consider the Lagrangian function:
$$L(q,\lambda,z) = -KL(q,p) - \lambda^\top \inparen{\sum_{\alpha \in \Lambda_n}q_\alpha \cdot \alpha - \theta}-z\cdot \inparen{\sum_{\alpha \in \Lambda_n} q_\alpha - 1}$$
We are going to derive a formula for $g(\lambda, z) = \max_{q} L(q,\lambda, z)$. To this end, we derive optimality conditions with respect to $q$.
$$\frac{\partial}{\partial q_\alpha} L = -\log q_\alpha -1+ \log p_\alpha - \lambda^\top \alpha - z=0.$$
Note that the above implies that $q>0$ and this is why we do not need to introduce dual variables for non-negativity constraints. Using the above conditions we obtain
$$g(\lambda , z) =\sum_{\alpha \in \Lambda_n}p_\alpha e^{-\lambda^\top \alpha - z-1}+\lambda^\top \theta +z .$$
Hence the dual can be written as:
$$
	\min_{\lambda\in \R^m , z\in \R}  ~~  \sum_{\alpha \in \Lambda_n} p_\alpha e^{-\lambda^\top \alpha - z-1}+\lambda^\top \theta +z.
$$
We eliminate the $z$ variable from the above by minimizing the objective with respect to $z$. Thus we obtain:
\begin{equation}\label{eq:dual2}
	\min_{\lambda\in \R^m}  ~~\log \inparen{ \sum_{\alpha \in \Lambda_n} p_\alpha e^{-\lambda^\top \alpha }}+\lambda^\top \theta .
\end{equation}
Because of our assumption, Slater's condition is satisfied and hence strong duality holds, thus we obtain equality between the optimal value of~\eqref{eq:entropy_theta} and~\eqref{eq:dual2}.
To obtain the desired form, replace $-\lambda_i\in \R$ by $\log z_i$ for $z>0$. This gives
\begin{equation}\label{eq:dual2}
	\min_{z>0}  ~~\log \inparen{ \sum_{\alpha \in \Lambda_n} p_\alpha z^ \alpha }-\sum_{i=1}^m \theta_i \ln z_i .
\end{equation}
and after taking the exponential we recover the familiar
\begin{equation}\label{eq:dualcapa}
\min_{z>0} \frac{p(z)}{\prod_{i=1}^m z_i^{\theta_i}}.
\end{equation}
Going back to our assumption that $\theta$ can be obtained as $\sum_{\alpha} q_\alpha \alpha$ with $q>0$. After dropping this assumption, the equality we established above still holds, but several complications arise, one of them being  that the minimum value might not be attained (only in the limit). We skip the proof in the general case. 

It is now enough to observe that taking the maximum over $\theta \in P(\cB)$ of~\eqref{eq:dualcapa} gives $\capa_\cB(p)$, hence indeed we obtain equality between $\capa_\cB(p)$ and the exponential of the optimal value of~\eqref{eq:entropy_theta}.

\end{proof}
\noindent 
Assume now for brevity that $p$ is normalized, i.e. $p(1)=1$. In case when its marginal vector $\theta$ already lies in $P(\cB)$, we know that (since $KL(q,p)\geq 0$), the optimal solution to~\eqref{eq:entropy} is $q=p$ and hence $\capa_\cB(p)=1$. In view of our results, this implies a quite surprising fact. If $p$ and $\cB$ satisfy the assumptions of Theorem~\ref{thm:finite} then we can lower-bound $p_\cB=\sum_{S\in \cB} p_S$ by an absolute positive number (not depending on $p$). If $p$ was an arbitrary polynomial then we could easily make its marginal lie in $P(\cB)$ (and thus $\capa_\cB(p)=1$) while keeping $p_S=0$ for all $S\in \cB$.

\section{Counterexamples}\label{sec:examples}
We provide examples showing that $\capa_\cB(g)$ might not give a good approximation to $g_\cB$ if either $\cB$ is not a family of bases of a matroid or $g$ is not real stable.

\parag{Example 1.} We consider the case when $m=4$ and $n=2$. Consider a polynomial $g(z) = (z_1+z_2)(z_3+z_4)$ which is real stable, as a product of real stable polynomials. We pick a family $\cB$ which is not a set of bases of a matroid, namely $\cB=\{\{1,2\}, \{3,4\}\}$. Using the dual characterization of $\capa_\cB$ we have:
$$\capa_\cB(g) = \inf_{\substack{ z>0\\ z_1z_2\geq 1, z_3z_4\geq 1}} g(z).$$
Hence we obtain that $\capa_\cB(g)=4$, while clearly $g_\cB = 0$. 

\parag{Example 2.} Consider now a similar example, with the roles of $g$ and $\cB$ ``reversed''. Let $\cB=\{\{1,3\}, \{1,4\}, \{2,3\}, \{2,4\}\}$ be a family of bases of a partition matroid and $g(z)=z_1 z_2 + z_3 z_4$. Again we have:
$$\capa_\cB(g)\geq 2,$$
while $g_\cB=0$. In this example the polynomial $g(z)$ is not real stable. Indeed if $\omega= e^{\frac{2\pi i}{8}}$ then $(\omega, \omega, \omega^3, \omega^3)$ is a root of $g$ with positive imaginary part. 

\section{Lower bound on $\mb$}\label{app:lower_bound}
We prove a lower bound on $\mb$, which can be seen as another interpretation of the fact that the quality of approximating the permanent of a nonnegative matrix $A$ by the capacity of its product polynomial $p_A$ can be $e^n$ in the worst case.
\begin{fact}
There exists a family $\cB \subseteq {[m] \choose n}$ of bases of a partition matroid  for which $\mb \geq e^{\sqrt{m}}$.
\end{fact}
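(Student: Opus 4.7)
The plan is to embed the classical van der Waerden / Gurvits tightness example for the permanent into a partition matroid framework. Take $m = n^2$, and index the variables as $z_{ij}$ for $i, j \in [n]$. Let the partition matroid have blocks $P_j = \{(i, j) : i \in [n]\}$ for $j = 1, \ldots, n$, each with capacity $b_j = 1$. Its bases are then exactly the sets $S_\tau = \{(\tau(j), j) : j \in [n]\}$ for functions $\tau : [n] \to [n]$, so $\cB \subseteq {[m] \choose n}$. I will exhibit a real stable, $n$-homogeneous, nonnegative polynomial $g$ with
\[
\frac{\capa_\cB(g)}{g_\cB} \;\geq\; \frac{n^n}{n!} \;=\; e^{\sqrt{m}\,(1-o(1))},
\]
which gives the desired bound for $m$ sufficiently large.

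The candidate polynomial is $g(z) = \prod_{i=1}^n \bigl(\sum_{j=1}^n z_{ij}\bigr)$. It is $n$-homogeneous, has nonnegative coefficients, and is real stable as a product of affine real stable polynomials. Expanded, $g(z) = \sum_{\sigma : [n] \to [n]} \prod_i z_{i, \sigma(i)}$. The coefficient of the square-free monomial $z^{S_\tau} = \prod_j z_{\tau(j), j}$ is therefore nonzero iff the row indices $\tau(1), \ldots, \tau(n)$ form a permutation of $[n]$, and in that case equals $1$ (the unique matching term corresponds to $\sigma = \tau^{-1}$). Hence $g_\cB = n! = \mathrm{per}(J_n)$, the permanent of the all-ones $n \times n$ matrix.

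For the capacity, I use the primal definition at the symmetric test point $\theta_{ij} = 1/n$, which lies in $P(\cB)$ since it is the marginal of the uniform distribution on bases. Applying AM--GM to each factor, $\sum_j z_{ij} \geq n \bigl(\prod_j z_{ij}\bigr)^{1/n}$, so
\[
\frac{g(z)}{\prod_{i,j} z_{ij}^{1/n}} \;\geq\; \frac{\prod_{i=1}^n n \bigl(\prod_j z_{ij}\bigr)^{1/n}}{\prod_{i,j} z_{ij}^{1/n}} \;=\; n^n
\]
uniformly in $z > 0$. Thus $\capa_\cB(g) \geq n^n$, and together with $g_\cB = n!$ this gives $\mb \geq n^n/n!$.

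Stirling's approximation then yields $n^n/n! \geq e^{\sqrt m}$ for $m$ sufficiently large (with a trivial pad-with-inert-variables argument to extend to every large $m$ if desired). The proof is essentially mechanical once the construction is in hand, and I foresee no real obstacle. The one cosmetic issue is the $\mathrm{poly}(m)$ loss between $n^n/n!$ and $e^{\sqrt m}$ produced by Stirling; this is absorbed by the ``$m$ sufficiently large'' caveat, in keeping with the order-of-growth nature of the claim.
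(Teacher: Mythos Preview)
Your argument is correct. You use the same partition matroid on an $n\times n$ grid as the paper and the same AM--GM trick to lower-bound the capacity, so the overall strategy matches. The one genuine difference is the choice of polynomial: the paper takes $g(z)=\bigl(\sum_{i,j} z_{ij}\bigr)^n$ and uses the dual characterization of $\capa_\cB$, obtaining $\capa_\cB(g)\ge n^{2n}$ and $g_\cB=n^n\cdot n!$; you take the product polynomial $g(z)=\prod_i\bigl(\sum_j z_{ij}\bigr)$ of the all-ones matrix and work from the primal side at $\theta\equiv 1/n$, obtaining $\capa_\cB(g)\ge n^n$ and $g_\cB=n!$. Both yield the identical ratio $n^n/n!$. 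Your polynomial is multilinear, so your example in fact also shows $\mlb \ge n^n/n!\approx e^{\sqrt{m}}$, which is slightly stronger than what the paper's (non-multilinear) example gives. The Stirling caveat you flag is exactly the same informality present in the paper's $\approx e^n$.
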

\begin{proof}
Consider a universe $U=[n]\times [n]$ of cardinality $m=n^2$. Bases of the considered matroid have exactly one element in every part $\{i\} \times [n]$ for $i=1,2,\ldots, n$. Formally:
$$\cB = \{\{(1,f(1)), (2,f(2)), \ldots, (n,f(n))\}: f\in [n]^{[n]}\}.$$
Of course $|\cB|=n^n$. Consider now a polynomial $g(z) = \inparen{\sum_{i,j} z_{i,j}}^n$, clearly $g(z)$ is real stable and it can be proved that $\capa_\cB(g) \geq  n^{2n}$. Indeed, from the AM-GM inequality:
$$g(z) =\inparen{\sum_{i,j} z_{i,j}}^n \geq( n^2)^n \prod_{i,j} z_{i,j}.$$
Under the condition that $z^S \geq 1$ for every $S\in \cB$ the above implies that $g(z)\geq n^{2n}$. 

It is also easy to calculate that $g_\cB = |\cB| \cdot n!=n^n \cdot n!$, hence we obtain $\mb \geq \frac{\capa_\cB(g)}{g_\cB} =\frac{n^{2n}}{n^n \cdot n!} \approx e^n=e^{\sqrt{m}}.$
\end{proof}

\section{A Bound for Partition Matroids}\label{app:partition}
The following lemma appeared implicitly in~\cite{NS16} in the context of determinantal polynomials.
\begin{lemma}
Let $\cM$ be a partition matroid $\{U(P_j, b_j)\}_{j\in [p]}$ then $\mlb \leq \frac{n^n}{n!} \prod_{j=1}^p \frac{b_j!}{b_j^{b_j}}$.
\end{lemma}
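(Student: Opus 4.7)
The plan is to reduce the problem to a Gurvits-type inequality in only $p$ variables, using the partition structure to collapse variables within each part. For each $i\in[m]$ let $j(i)\in[p]$ denote the index of the part containing $i$, and define $\tilde g(y_1,\ldots,y_p) \defeq g(y_{j(1)},\ldots,y_{j(m)})$. Because identifying variables preserves real stability (if $g(z)$ has no root with all $\Im(z_i)>0$, then neither does its specialization to $z_i=z_{i'}$), the polynomial $\tilde g$ is real stable and $n$-homogeneous in $p$ variables.

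Next I would establish two facts about $\tilde g$. First, $\capa_\cB(g) \leq \inf_{y>0} \tilde g(y)/\prod_{j=1}^p y_j^{b_j}$: this follows by restricting the inner infimum in the definition of $\capa_\cB(g)$ to $z$ of the form $z_i = y_{j(i)}$, and noting that every $\theta\in P(\cB)$ satisfies $\sum_{i\in P_j}\theta_i = b_j$, so $\prod_i z_i^{\theta_i}$ collapses to $\prod_j y_j^{b_j}$ independently of $\theta$. Second, since $g$ is multiaffine,
$$\tilde g(y) \;=\; \sum_{S\subseteq[m],\,|S|=n} g_S \prod_{j=1}^p y_j^{|S\cap P_j|},$$
so the coefficient of $\prod_j y_j^{b_j}$ is precisely $\sum_{S:|S\cap P_j|=b_j\,\forall j} g_S = g_\cB$. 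Multiaffinity of $g$ is crucial here: for non-multiaffine $g$, square monomials would add further nonnegative terms to this coefficient, which is why the argument gives only an $\mlb$ bound and not an $\mb$ bound.

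It remains to relate $\inf_y \tilde g(y)/\prod_j y_j^{b_j}$ to the coefficient $g_\cB$; for this I would prove a generalized Gurvits inequality: for any $n$-homogeneous real stable $f\in\R^+[y_1,\ldots,y_p]$ and any $\alpha\in\N^p$ with $|\alpha|=n$,
$$f_\alpha \;\geq\; \frac{n!}{n^n}\prod_{j=1}^p \frac{\alpha_j^{\alpha_j}}{\alpha_j!}\;\inf_{y>0}\frac{f(y)}{y^\alpha}.$$
The key step is the substitution $y_j \mapsto \sum_{k=1}^{\alpha_j} x_k^{(j)}$, which produces a polynomial $q(x)$ in $n=|\alpha|$ fresh variables that is $n$-homogeneous and real stable (substituting a variable by a positive linear form is a standard stability-preserving operation). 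A direct expansion, using the $n$-homogeneity of $f$ to force $|\beta|=n$, shows that the only $\beta\in\N^p$ contributing to $[\prod_{j,k} x_k^{(j)}]q$ is $\beta=\alpha$, and the coefficient equals $f_\alpha \prod_j \alpha_j!$. Meanwhile, AM-GM gives $\prod_k x_k^{(j)} \leq (y_j/\alpha_j)^{\alpha_j}$ with $y_j\defeq\sum_k x_k^{(j)}$, whence $\capa(q)\geq \prod_j \alpha_j^{\alpha_j}\,\inf_y f(y)/y^\alpha$ (the map $x\mapsto y$ is surjective onto the positive orthant). Applying Gurvits' classical inequality $q_{[n]}\geq \frac{n!}{n^n}\capa(q)$ to $q$, which lives in exactly $n$ variables, and rearranging yields the displayed bound. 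Specializing to $f=\tilde g$ and $\alpha=(b_1,\ldots,b_p)$ and chaining with the earlier steps gives
$$\capa_\cB(g)\;\leq\;\frac{n^n}{n!}\prod_{j=1}^p \frac{b_j!}{b_j^{b_j}}\,g_\cB,$$
which is exactly the claimed bound on $\mlb$.

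The main obstacle is the bookkeeping in the generalized Gurvits step---specifically verifying that the combinatorial coefficient extraction gives $f_\alpha\prod_j\alpha_j!$ with no spurious $\beta\neq\alpha$ contributions, and that the AM-GM argument correctly transfers the infimum over the $n$ variables of $q$ to the infimum over the $p$ variables of $f$ with no loss in constants. Everything else is either a standard closure property of real stable polynomials or a direct specialization.
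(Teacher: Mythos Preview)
Your proof is correct and follows essentially the same route as the paper: both substitute each $z_i$ by a sum of $b_{j(i)}$ fresh variables to land in an $n$-variable, $n$-homogeneous real stable polynomial, identify its multilinear coefficient as $g_\cB\prod_j b_j!$, and then apply Gurvits' classical inequality together with an AM--GM comparison that produces the $\prod_j b_j^{b_j}$ factor. The only organizational difference is that you factor the argument through the intermediate $p$-variable polynomial $\tilde g$ and state a standalone ``generalized Gurvits'' inequality for $f_\alpha$, whereas the paper composes the two substitutions into one step and bounds $\capa_\cB(g)$ via the dual characterization (relaxing $z^S\geq 1$ to $z^S=1$, which forces $z$ constant on parts); the computations and constants match exactly.
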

\begin{proof}
Let $n=\sum_{j=1}^p b_j$ be the rank of the partition matroid. Consider an $n-$homogeneous polynomial $g\in \R_1^+[z_1, \ldots, z_m]$. We perform the following \textit{symmetrization} procedure. For every part $P_j$ we introduce $b_j$ new variables $u_{j,1}, u_{j,2}, \ldots, u_{j,b_j}$ and define $s_j = \sum_{i=1}^{b_j} u_{j,i}$. For notational convenience, let us define a function $\sigma: [m] \to [p]$ which given an element $e\in [m]$ outputs an index $\sigma(e)$ such that $e\in P_{\sigma(e)}$. We consider
$$f(u) = g(s_{\sigma(1)}, s_{\sigma(2)}, \ldots, s_{\sigma(m)}).$$ 
Note that now $f(u)$ is $n-$variate, $n-$homogeneous and real stable. Moreover, we can relate $g_\cB$ to the coefficient (call it $c$) of $\prod_{i,j} u_{i,j}$ in $f(u)$ as follows:
$$c=p_\cB \cdot \prod_{j=1}^p b_j!.$$
By Gurvits' inequality for $n-$variate $n-$homogeneous polynomials we have
$$\capa(f) \leq \frac{n^n}{n!} c.$$
We will use the following simple upper bound on $\capa_\cB(g)$
$$\capa_\cB(g) = \inf_{\substack{ z>0\\ \forall S\in \cB~z^S\geq 1}}g(z)\leq \inf_{\substack{ z>0\\ \forall S\in \cB~z^S= 1}}g(z).$$
Note that importantly
$$\inf_{\substack{ z>0\\ z^S= 1, S\in \cB}}g(z) = \frac{1}{\prod_{j=1}^p b_j^{b_j}} \cdot \inf_{\substack{ u>0\\ \prod_{i,j} u_{i,j}=1}}f(u) =  \frac{1}{\prod_{j=1}^p b_j^{b_j}} \cdot \capa(f).$$
this equality follows because the constraints $z^S=1$ for every $S\in \cB$, imply  that the value of $z_i$ is constant inside every partition $P_j$. Hence there exists a one-to-one mapping between feasible $z$ and feasible $u$. As a consequence we obtain:
$$\capa_\cB(g) \leq \frac{1}{\prod_{j=1}^p b_j^{b_j}} \cdot \capa(f) \leq \frac{1}{\prod_{j=1}^p b_j^{b_j}}  \frac{n^n}{n!} \cdot c= p_\cB \cdot \frac{n^n}{n!} \cdot \prod_{j=1}^p \frac{b_j!}{b_j^{b_j}}.$$
\end{proof}

\end{appendix}

\end{document}